\documentclass[11pt]{article}
\usepackage{amssymb}
\usepackage{amsmath}
\usepackage{amsthm}
\usepackage{latexsym}
\usepackage{graphicx}
\textheight 235mm
\topmargin-2cm  \textwidth  165mm
\oddsidemargin0cm \evensidemargin0cm
\parskip0ex
\parindent6mm
\newtheorem{theo}{Theorem}[section]
\newtheorem{prop}[theo]{Proposition}
\newtheorem{cor}[theo]{Corollary}
\newtheorem{lemma}[theo]{Lemma}
\theoremstyle{definition}
\newtheorem{defi}[theo]{Definition}
\newtheorem{exa}[theo]{Example}
\newtheorem{rem}[theo]{Remark}
\numberwithin{equation}{section}


\newcommand{\DS}{\displaystyle}


\newcommand{\F}{{\mathbb F}}

\newcommand{\Z}{{\mathbb Z}}

\newcommand{\C}{{\mathbb C}}
\newcommand{\R}{{\mathbb R}}

\newcommand{\cH}{{\mathcal H}}

\newcommand{\cP}{{\mathcal P}}

\newcommand{\Rhat}{\widehat{R}}
\newcommand{\wcP}{\widehat{\mathcal P}}

\newcommand{\cPhom}{{\mathcal P}_{\rm{hom}}}
\newcommand{\cPrank}{{\mathcal P}_{\rm{rk}}}
\newcommand{\wcPhom}{\widehat{{\mathcal P}_{\rm{hom}}}}
\newcommand{\wcPchil}{\wcP^{^{\scriptscriptstyle[\chi,l]}}}
\newcommand{\wcPchir}{\wcP^{^{\scriptscriptstyle[\chi,r]}}}
\newcommand{\wcPhoml}{\widehat{{\mathcal P}_{\rm{hom}}}^{\scriptscriptstyle l}}
\newcommand{\wcPhomr}{\widehat{{\mathcal P}_{\rm{hom}}}^{\scriptscriptstyle r}}

\newcommand{\cQ}{{\mathcal Q}}

\newcommand{\GL}{\mbox{\rm GL}}

\newcommand{\widesim}[1][1.5]{\scalebox{#1}[1]{$\sim$}}

\newcommand{\rank}{\mbox{\rm rk}\,}

\renewcommand{\mod}{\mbox{\rm mod}\,}
\newcommand{\soc}{\mbox{\rm soc}}
\newcommand{\rad}{\mbox{\rm rad}}

\newcommand{\rowspace}{\mbox{\rm rowspace}}
\newcommand{\wt}{{\rm wt}}

\newcommand{\tr}{{\rm tr}}

\newcommand{\mmid}{\mbox{$\,|\,$}}
\newcommand{\mmidbig}{\mbox{$\,\big|\,$}}

\newcommand{\Gaussian}[2]{\genfrac{[}{]}{0pt}{}{#1}{#2}}
\newcommand{\Binom}[2]{\genfrac{(}{)}{0pt}{}{#1}{#2}}


\newenvironment{liste}{\begin{list}{--\hfill}{\topsep0ex \labelwidth.4cm
   \leftmargin.5cm \labelsep.1cm \rightmargin0cm \parsep0ex \itemsep.6ex
   \partopsep1.4ex}}{\end{list}}
\newcounter{alp}
\newcounter{ara}
\newcounter{rom}
\newenvironment{romanlist}{\begin{list}{(\roman{rom})\hfill}{\usecounter{rom}
     \topsep0ex \labelwidth.7cm \leftmargin.7cm \labelsep0cm
     \rightmargin0cm \parsep0ex \itemsep.4ex
     \partopsep1ex}}{\end{list}}
\newenvironment{alphalist}{\begin{list}{(\alph{alp})\hfill}{\usecounter{alp}
     \topsep.5ex \labelwidth.6cm \leftmargin.6cm \labelsep0cm
     \rightmargin0cm \parsep0ex \itemsep0ex
     \partopsep1.6ex}}{\end{list}}
\newenvironment{arabiclist}{\begin{list}{(\arabic{ara})\hfill}{\usecounter{ara}
     \topsep0ex \labelwidth.6cm \leftmargin.6cm \labelsep0cm
     \rightmargin0cm \parsep0ex \itemsep0ex
     \partopsep1.6ex}}{\end{list}}

\title{The Homogeneous Weight Partition and\\ its Character-Theoretic Dual}
\date\today
\author{Heide Gluesing-Luerssen\thanks{The author was partially supported by the National Science Foundation
        grant \#DMS-1210061.}\\
        University of Kentucky\\ Department of Mathematics\\
       715 Patterson Office Tower\\ Lexington, KY 40506-0027, USA\\ heide.gl@uky.edu}

\begin{document}
\maketitle
\noindent{\bf Abstract:}
The values of the normalized homogeneous weight are determined for arbitrary finite Frobenius rings and expressed in
a form that is independent from a generating character and the M\"obius function on the ring.
The weight naturally induces a partition of the ring, which is invariant under
left or right multiplication by units.
It is shown that the character-theoretic left-sided dual
of this partition coincides with the right-sided dual, and even more, the left- and right-sided Krawtchouk
coefficients coincide.
An example is provided showing that this is not the case for general invariant partitions if the ring is
not semisimple.

\medskip
\noindent{\bf Keywords:} Homogeneous weight, finite Frobenius rings, character-theoretic dual partitions

\smallskip
\noindent{\bf MSC (2000):} 94B05, 94B99, 16L60

\section{Introduction}\label{SS-Intro}

In this paper we will study the homogeneous weight on arbitrary finite Frobenius rings.
These weights have been introduced by Constantinescu and Heise in~\cite{CoHe97} and have received a lot of
attention in the ring-linear coding literature ever since.
We refer to the introductions of any of the papers \cite{BGO07,BKS12,GL13homog,GrSch00,Hon01,HoLa01} for
motivation and background on the homogeneous weight.

In this paper we will first present, for an arbitrary finite Frobenius ring, an expression for the values of the
(normalized) homogeneous weight that is different from those based on characters or the M\"obius function.
This extends the results in~\cite{GL13homog} where we restricted ourselves to rings that are direct products of local
Frobenius rings.
We will see that the homogeneous weight has value one exactly for the elements outside the socle.
Moreover, the weight on the socle of~$R$ is closely related to the weight on $R/\rad(R)$, where $\rad(R)$ is the Jacobson radical.
The Wedderburn-Artin decomposition for $R/\rad(R)$ will then reduce the considerations
to studying the homogeneous weight on matrix rings over finite fields.
In that case it is easy to compute the values based on the rank.
It will show that matrices share the same weight if and only if they have the same rank.

After having determined the values of the homogeneous weight in the described form, we will turn to the partition of the ring induced by
this weight and study its dual with respect to character-theoretic dualization.
This dualization plays a central role in the area of MacWilliams identities.
Indeed, if a partition of~$R$ (or $R^n$) is reflexive (i.e., coincides with its character-theoretic bidual), then the
partition enumerator of a code and the dual-partition enumerator of the dual code uniquely determine each other.
This has been discussed in various forms in many papers, see for instance
\cite{BGO07,Cam98,Del73,HoLa01,ZiEr96,ZiEr09} as well as~\cite{GL13pos}, where a general approach in the terminology of
this paper was presented.

For non-commutative rings the above notions all come in a left and right version.
Furthermore, with the left and right dual partitions are associated left and right Krawtchouk coefficients, which determine the actual
MacWilliams transformation between the partition enumerator of a code and that of its (left or right) dual code.

We will show that for the homogeneous weight partition, the left and right dual coincide, and so do the left and right
Krawtchouk coefficients.
As we will see, this is true for any invariant partition (i.e., all partition sets are invariant under left or right
multiplication by units) as long as the ring is semisimple.
But for non-semisimple rings, the particularly close relationship between the homogeneous weight on the socle and the
homogeneous weight on $R/\rad(R)$ will be crucial for the independence of dualization and Krawtchouk coefficients
from the sidedness.

\section{Preliminaries}
We begin with briefly recalling some properties of finite Frobenius rings.
Then we move on to the homogeneous weight and present some basic facts.

Let~$R$ be a finite ring with identity, and let~$R^*$ be its group of units.
Denote by~$\soc(_R R)$ the socle of the left~$R$-module~$R$, and
let~$\rad(R)$ be the Jacobson radical of~$R$.
Moreover, denote by~$\Rhat:=\text{Hom}(R,\C^*)$ the group of characters of~$R$
(i.e., group homomorphisms from $(R,\,+\,)$ to $\C^*$).
Then~$\Rhat$ is an $R$-$R$-bimodule via the left and right scalar multiplications
\begin{equation}\label{e-bimodule}
    (r\!\cdot\!\chi)(v)=\chi(vr)\ \text{ and }\
    (\chi\!\cdot\! r)(v)=\chi(rv)\text{ for all } r\in R\text{ and }v\in R.
\end{equation}
We summarize the crucial properties of finite Frobenius rings.
Details can be found in Lam~\cite[Th.~(16.14)]{Lam99}, Lamprecht~\cite{Lamp53}, Hirano~\cite[Th.~1]{Hi97}, and
Wood~\cite[Th.~3.10, Prop.~5.1]{Wo99}.
The most crucial aspect, namely that for finite rings the right-sided statements imply the left-sided ones,
has been proven by Honold~\cite[Th.~1 and Th.~2]{Hon01}.

\begin{theo}\label{T-Frob}
Let~$R$ be a finite ring. The following are equivalent.
\begin{alphalist}
\item $_R\,\soc(_R R)\cong\, _R(R/\rad(R))$.
\item $\soc(_R R)$ is a left principal ideal, i.e., $\soc(_R R)=Ra$ for some $a\in R$.
\item $_R\Rhat\cong\, _RR$.
\end{alphalist}
Each of the above is equivalent to the corresponding right-sided version.
The ring~$R$ is called Frobenius if any (hence all) of the above hold true.
In this case there exists a character~$\chi$ such that $\Rhat= R\!\cdot\!\chi$.
Any such character is called a generating character of~$R$ and also satisfies $\Rhat= \chi\!\cdot\!R$.
Any two generating characters~$\chi,\,\chi'$ differ by a unit, i.e., $\chi'=u\!\cdot\!\chi$ and
$\chi'=\chi\!\cdot\!u'$ for some $u,\,u'\in R^*$.
Furthermore, if~$R$ is Frobenius then the left and right socle coincide, and will be denoted by $\soc(R)$.
\end{theo}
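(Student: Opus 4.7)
The plan is to prove the three left-sided equivalences (a)--(c) directly, then invoke the left-right symmetry of Honold to complete the equivalence picture, and finally extract the remaining statements about generating characters and the coincidence of left and right socle.

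For (a) $\Leftrightarrow$ (b) I would argue as follows. The implication (a) $\Rightarrow$ (b) is immediate because $R/\rad(R)$ is cyclic as a left $R$-module, generated by $1+\rad(R)$, so any isomorphic image is a principal left ideal. For (b) $\Rightarrow$ (a), consider the surjective left $R$-linear map $\varphi\colon R\to Ra$, $r\mapsto ra$. Since $\rad(R)$ annihilates every semisimple left module, $\rad(R)\subseteq\ker\varphi$, and $\varphi$ descends to a surjection $\bar\varphi\colon R/\rad(R)\twoheadrightarrow\soc({}_RR)$. Using the Wedderburn decomposition $R/\rad(R)\cong\prod_i M_{n_i}(\F_{q_i})$, the isotypic multiplicities inside $R/\rad(R)$ are determined explicitly; comparing these to the multiplicities inside $\soc({}_RR)$ (bounded above because the socle is a cyclic quotient, and forced to be equal by a cardinality count on $R$) shows that $\bar\varphi$ is an isomorphism.

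For (c) $\Leftrightarrow$ (a) the key observation is that $\soc({}_R\Rhat)$ may be identified with the subgroup of characters vanishing on $\rad(R)$, which in turn is $\widehat{R/\rad(R)}$ equipped with its natural left $R$-module structure. Since $R/\rad(R)$ is semisimple it is Frobenius (via trace-of-product characters on each Wedderburn factor), so $\widehat{R/\rad(R)}\cong R/\rad(R)$ as left $R$-modules. Under (c) this yields (a) at once. Conversely, (a) forces $\soc({}_R\Rhat)$ to be cyclic and hence equal to $R\cdot\chi$ for some $\chi\in\Rhat$; the left $R$-linear map $R\to\Rhat$, $r\mapsto r\cdot\chi$, is then surjective by a cardinality count and thus an isomorphism.

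The main obstacle will be the left-right symmetry. The three arguments sketched above all dualize, so the left-sided conditions are mutually equivalent and the right-sided conditions are mutually equivalent, but showing that a left-sided statement implies its right-sided counterpart for finite rings is the content of Honold's theorem, which I would invoke rather than reprove; its proof rests on a delicate comparison of simple-module multiplicities inside the Wedderburn blocks of $R/\rad(R)$. Once symmetry is available, the remaining assertions follow: condition (c) together with its right-sided counterpart produces a character $\chi$ with $\Rhat=R\cdot\chi$, and a short argument replacing $\chi$ by a suitable $\chi\cdot u$ if necessary yields $\Rhat=\chi\cdot R$ for the \emph{same} $\chi$; if $\chi'\in\Rhat$ is another generator then $\chi'=u\cdot\chi$ with $u\in R^*$, since otherwise $R\cdot\chi'\subsetneq\Rhat$ would contradict generation, and analogously on the right; and the equality $\soc({}_RR)=\soc(R_R)$ is a standard consequence of the Frobenius property, flowing from the right-sided analog of the isomorphism in~(a).
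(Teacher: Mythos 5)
The paper does not actually prove this theorem: it is a summary of known results, with the equivalences attributed to Lam, Lamprecht, Hirano and Wood, and the one-sided-to-two-sided passage to Honold. Measured against that literature, two of your steps have genuine gaps. The more clear-cut failure is in (a) $\Rightarrow$ (c): you choose $\chi$ with $R\cdot\chi=\soc(_R\Rhat)$ and claim $r\mapsto r\cdot\chi$ is surjective onto $\Rhat$ by a cardinality count. But $\chi=1\cdot\chi$ then lies in $\soc(_R\Rhat)\cong\widehat{R/\rad(R)}$, so the image of your map is exactly that socle, of cardinality $|R/\rad(R)|<|R|=|\Rhat|$ whenever $R$ is not semisimple; the map is neither injective nor surjective, and the cardinality count cannot rescue it. The standard repair is different in kind: $_R\Rhat$ is an injective module with essential socle, hence is the injective hull of $\soc(_R\Rhat)\cong\soc(_RR)$ (using (a)), while $_RR$ embeds into the injective hull of its own essential socle; this produces an embedding $_RR\hookrightarrow{}_R\Rhat$, and only then does $|R|=|\Rhat|$ force an isomorphism.

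In (b) $\Rightarrow$ (a), the surjection $\bar\varphi\colon R/\rad(R)\twoheadrightarrow\soc(_RR)$ is correct and gives the upper bound on multiplicities, but the phrase ``forced to be equal by a cardinality count on $R$'' conceals the entire difficulty: nothing elementary yields $|\soc(_RR)|=|R/\rad(R)|$ from principality of the socle alone. (For $R=T_2(\F_q)$ the two sides even have equal cardinality without being isomorphic; the socle is not principal there, but the example shows that cardinality by itself carries no module-theoretic information, and the needed equality is precisely what has to be proved.) This one-sided implication is part of the substance of Honold's theorem, whose proof goes through the character module; you cannot both claim to prove (a)--(c) directly and defer only the left--right symmetry to Honold, since (b) $\Rightarrow$ (a) is already one of the cited hard results. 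A smaller gap: to show that \emph{any} left generating character also generates $\Rhat$ on the right, replacing $\chi$ by some $\chi\cdot u$ does not suffice, because the claim is about every left generator; one needs the criterion that $\chi$ generates $\Rhat$ on either side if and only if $\ker\chi$ contains no nonzero one-sided ideal. The remaining pieces --- (a) $\Rightarrow$ (b), (c) $\Rightarrow$ (a) via $\soc(_R\Rhat)\cong\widehat{R/\rad(R)}$ and the semisimple trace characters, and the unit relation between two generators --- are sound.
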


The integer residue rings~$\Z_N$, finite fields, finite chain rings are Frobenius and so are
finite group rings as well as matrix ring over Frobenius rings.
The class of Frobenius rings is closed under taking direct products.
For details see Wood~\cite[Ex.~4.4]{Wo99} and Lam~\cite[Sec.~16.B]{Lam99}.

For the rest of this paper, let~$R$ be a finite Frobenius ring with group of units~$R^*$, and fix a generating
character~$\chi$ of~$R$.

The following definition of the homogeneous weight is taken from  Greferath and Schmidt~\cite{GrSch00}.
\begin{defi}\label{D-homogWt}
The \emph{(left) homogeneous weight on~$R$ with average value}~$\gamma\geq0$ is a map
$\omega:R\longrightarrow\R$ such that $\omega(0)=0$ and
\begin{romanlist}
\item $\omega(x)=\omega(y)$ for all $x,y\in R$ such that $Rx=Ry$,
\item $\sum_{y\in Rx}\omega(y)=\gamma|Rx|$ for all $x\in R\backslash\{0\}$; in other words, the average weight over each
      nonzero principal left ideal is~$\gamma$.
\end{romanlist}
The weight is called \emph{normalized} if $\gamma=1$.
\end{defi}

It has been shown in \cite[Th.~1.3]{GrSch00} that for any given~$\gamma\in\R_{\geq0}$ there exists a unique
homogeneous weight on~$R$ with average value~$\gamma$.
Of course, if~$R$ is a field of size~$q$, then the Hamming weight is the homogeneous weight with average value
$\frac{q-1}{q}$.
We next present the homogeneous weight for two specific cases.
Further examples can be found  in \cite[Sec.~3]{GL13homog}.

\begin{exa}\label{E-HomogWt}
\begin{arabiclist}
\item \cite[Ex.~2.8]{BGO07} If~$R$ is a local Frobenius ring with residue field $R/\rad(R)$ of order~$q$, then
      the normalized homogeneous weight is given by $\omega(0)=0$ and
      \[
         \omega(a)=\frac{q}{q-1}\text{ for } a\in \soc(R)\backslash\{0\}\ \text{ and }\
         \omega(a)=1\text{ otherwise.}
      \]
\item \cite[Ex.~2]{Byr11} and \cite[Ex.~4.5(b)]{GL13homog}
      If~$R=\Z_{pq}$, where~$p,\,q$ are distinct primes then the normalized homogeneous weight  is given by
      $\omega(0)=0$ and
      \[
         \omega(a)=
         \left\{\begin{array}{cl}\frac{p}{p-1},&\text{for }a\in p\Z_{pq}\backslash\{0\},\\[1ex]
                                \frac{q}{q-1},&\text{for }a\in q\Z_{pq}\backslash\{0\},\\[1ex]
                                \frac{pq-p-q}{(p-1)(q-1)},&\text{otherwise}.\end{array}\right.
      \]
\end{arabiclist}
\end{exa}

The following properties and explicit formulas will be crucial.
\begin{rem}\label{R-PropHomWt}
Let~$\omega$ be the normalized homogeneous weight on~$R$. Then
\begin{alphalist}
\item $\sum_{y\in I}\omega(y)=|I|$ for all nonzero left ideals~$I$ of~$R$, see ~\cite[Cor.~1.6]{GrSch00}.
\item Let~$\chi$ be a generating character of~$R$. Then by~\cite[p.~412, Th.~2]{Hon01}
      \[
           \omega(r)=1-\frac{1}{|R^*|}\sum_{u\in R^*}\chi(ru)=1-\frac{1}{|R^*|}\sum_{u\in R^*}\chi(ur)  \text{ for } r\in R.
      \]
\item The weight~$\omega$ is also right homogeneous, that is,~(i) and~(ii) of Definition~\ref{D-homogWt} are true for right ideals as well.
      This follows from~(b), see again \cite[Th.~2]{Hon01}.
\end{alphalist}
\end{rem}

In~\cite{GL13homog}, alternative expressions for the values of homogeneous weight and properties of the induced partition
were derived for products of local Frobenius rings.
In this paper we will study the homogeneous weight for arbitrary Frobenius rings.

As a first step we show that the normalized homogeneous weight is~$1$ for any element outside the socle.
To this end we make use of Theorem~\ref{T-Frob}(a).
Let
\begin{equation}\label{e-psi}
  \psi:\; _R\,\soc(R)\longrightarrow _R\!(R/\rad(R))
\end{equation}
be an isomorphism of left $R$-modules.
Recall that $\rad(R)$ is a two-sided ideal, hence  $R/\rad(R)$ is a ring.
Even more, since $R/\rad(R)$ is a finite semisimple ring~\cite[Th.~(4.14)]{Lam91}, the Wedderburn-Artin Theorem~\cite[Th.~(3.5)]{Lam91}
along with the fact that every finite division ring is a field
provides us with a ring isomorphism
\begin{equation}\label{e-RradR}
  R/\rad(R)\cong (\F_{q_1})^{m_1\times m_1}\times\ldots\times(\F_{q_t})^{m_t\times m_t}
\end{equation}
for suitable prime powers $q_1,\ldots,q_t$ and positive integers $m_1,\ldots,m_t$.
In particular, $R/\rad(R)$ is Frobenius.

Now we are ready to show that the homogeneous weight on~$R$ is constant outside the socle.
\begin{theo}\label{T-homogWtsoc}
Denote by~$\tilde{\omega}$ the normalized homogeneous weight on the ring $R/\rad(R)$.
Then the normalized homogeneous weight~$\omega$ on~$R$ is given by
\[
    \omega(x)=\left\{\begin{array}{cl} \tilde{\omega}(\psi(x)),&\text{ for }x\in\soc(R),\\[.5ex]
                                                 1,    &\text{ for }x\not\in\soc(R).\end{array}\right.
\]
In other words, $\omega(x)=1$ for all $x\in R\,\backslash\,\soc(R)$ and $\omega|_{\rm{soc}(R)}=\tilde{\omega}\circ\psi$.
\end{theo}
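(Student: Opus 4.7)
The approach is to build a candidate function $\omega: R \to \R$ via the given formula and verify that it satisfies the defining conditions of the normalized homogeneous weight on $R$; uniqueness (\cite[Th.~1.3]{GrSch00}) then identifies it with the actual homogeneous weight. So define $\omega(x)=\tilde\omega(\psi(x))$ on $\soc(R)$ and $\omega(x)=1$ elsewhere, and check the three conditions $\omega(0)=0$, invariance on principal left ideals, and the averaging property with $\gamma=1$.

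The condition $\omega(0)=0$ is immediate since $\psi(0)=0$ and $\tilde\omega(0)=0$. For invariance under $Rx=Ry$, note first that $\soc(R)$ is a left ideal, so $x\in\soc(R)$ forces $y\in Rx\subseteq\soc(R)$, and vice versa; hence the case $x\notin\soc(R)$ is trivial. If $x\in\soc(R)$, then $\psi$ being an isomorphism of left $R$-modules gives $R\psi(x)=\psi(Rx)=\psi(Ry)=R\psi(y)$, so property (i) of $\tilde\omega$ on $R/\rad(R)$ delivers $\tilde\omega(\psi(x))=\tilde\omega(\psi(y))$.

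The substantive step is the averaging identity $\sum_{y\in Rx}\omega(y)=|Rx|$ for $x\neq 0$. If $x\in\soc(R)$, then $\psi$ restricts to a bijection $Rx\to R\psi(x)$, where $R\psi(x)$ is a nonzero left ideal of $R/\rad(R)$; Remark~\ref{R-PropHomWt}(a) applied to $\tilde\omega$ gives $\sum_{y\in Rx}\tilde\omega(\psi(y))=|R\psi(x)|=|Rx|$. If $x\notin\soc(R)$, set $I=Rx$ and split $I=(I\cap\soc(R))\sqcup(I\setminus\soc(R))$. The second part contributes $|I\setminus\soc(R)|$ since each element has weight $1$. For the first part, $\psi(I\cap\soc(R))$ is a left $R$-submodule of $R/\rad(R)$ and therefore a left ideal of the ring $R/\rad(R)$ (the $R$- and $R/\rad(R)$-actions coincide on $R/\rad(R)$); by Remark~\ref{R-PropHomWt}(a) it contributes $|\psi(I\cap\soc(R))|=|I\cap\soc(R)|$. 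The two partial sums total $|I|$, as required.

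The one point that deserves attention, and which I expect to be the only genuine obstacle, is that Remark~\ref{R-PropHomWt}(a) requires the relevant left ideal of $R/\rad(R)$ to be nonzero. In the case $x\notin\soc(R)$ we therefore need $I\cap\soc(R)\neq 0$, which holds because every nonzero left ideal of a finite (Artinian) ring contains a simple left submodule, and every such submodule lies in $\soc(R)$ by definition. This finiteness input is the only nontrivial ingredient beyond bookkeeping with $\psi$.
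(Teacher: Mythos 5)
Your proof is correct and follows essentially the same route as the paper: define the candidate weight, verify conditions (i) and (ii) of Definition~\ref{D-homogWt} by splitting a left ideal into its intersection with $\soc(R)$ and its complement, transport the socle part through $\psi$ to $R/\rad(R)$, and invoke uniqueness. Your explicit observation that $Rx\cap\soc(R)\neq 0$ for $x\neq 0$ (via the Artinian property) is a point the paper's proof leaves implicit, and it is handled correctly.
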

\begin{proof}
We show that~$\omega$ given above satisfies~(i) and~(ii) of Definition~\ref{D-homogWt}.
This is obvious for~(i).
For~(ii) we will show the constant average weight property for arbitrary left ideals (see Remark~\ref{R-PropHomWt}(a)),
and make use of the same property for the homogeneous weight~$\tilde{\omega}$ on~$R/\rad(R)$.

Thus, let~$I$ be a left ideal in~$R$.
If $I$ is  contained in~$\soc(R)$, then
\[
  \sum_{y\in I}\omega(y)=\sum_{y\in I}\tilde{\omega}(\psi(y))=\sum_{y\in\psi(I)}\tilde{\omega}(y)
  = |\psi(I)|=|I|.
\]
If~$I$ is not contained in $\soc(R)$, then $I\cap\text{soc}(R)$ is in $\soc(R)$, and from the previous part we obtain
\[
  \sum_{y\in I}\omega(y)=\sum_{y\in I\cap\text{soc}(R)}\omega(y)+\!\!\sum_{y\in I\backslash\text{soc}(R)}\!\!\!1\\
      =\big|I\cap\text{soc}(R)\big|+\big|I\,\backslash\,\soc(R)\big|=|I|.
  \qedhere
\]
\end{proof}

\begin{rem}\label{R-RightOmega}
Since any left homogeneous weight is also right homogeneous, we also obtain a right-sided version of the last theorem, that is
$\omega|_{\rm{soc}(R)}=\tilde{\omega}\circ\psi_r$, where $\psi_r: \soc(R)_R\longrightarrow (R/\rad(R))_R$ is an isomorphism of right
$R$-modules.
\end{rem}

In view of the previous result, it remains to compute the values of the homogeneous weight on the socle of~$R$.
Since these values are given by $\tilde{\omega}(\psi(x))$, this amounts to computing
the values for semisimple Frobenius rings.
This will be carried out in the next sections.

\section{The Homogeneous Weight on a Matrix Ring}
Let $\F=\F_q$ be any field of size~$q$, and let $R:=\F^{m\times m}$ be the matrix ring over~$\F$ of order~$m>1$.
It is well-known that~$R$ is simple.
We will present the values of the normalized homogeneous weight on~$R$ in terms of the rank.
We need to consider the principal left ideals in~$R$
(it is worth noting and not hard to see that~$R$ is even a left principal ideal ring, i.e., all left ideals are principal).
Clearly, for any matrices $A,\,B\in R$
\begin{equation}\label{e-LIdeals}
  RA=RB\Longleftrightarrow B=UA\text{ for some }U\in\GL_m(\F)\Longleftrightarrow \rowspace(A)=\rowspace(B).
\end{equation}
As a consequence, the left principal ideals in~$R$ are in one-to-one correspondence with the subspaces of~$\F^m$.
Recall that the number of $j$-dimensional subspaces in~$\F^m$ is given by the Gaussian coefficient
\begin{equation}\label{e-GaussCoeff}
   \Gaussian{m}{j}:=\Gaussian{m}{j}_q:=    \prod_{i=0}^{j-1}\frac{q^m-q^i}{q^j-q^i}\, \text{ for }j=0,\ldots,m.
\end{equation}
Hence~\eqref{e-LIdeals} shows that the number of left principal ideals in~$R$ is given by $\sum_{j=0}^m \Gaussian{m}{j}_q$.
The cardinality of any left principal ideal and the number of matrices of fixed rank within such an ideal are given as follows.

\begin{lemma}\label{L-RAcard}
Let $A\in R$ be a matrix of rank~$r$. Then $|RA|=q^{rm}$ and
$\big|\{B\in RA\mid \rank B=j\}\big|=s_j(m,r)$,
where
\[
    s_j(m,r)=\Gaussian{r}{j}\alpha_j(q^m)=\frac{\alpha_j(q^r)\alpha_j(q^m)}{\alpha_j(q^j)} \text{ and }
    \alpha_j(x):=\alpha_{q,j}(x)=\prod_{i=0}^{j-1}(x-q^i).
\]
As a consequence, $\sum_{j=0}^r s_j(m,r)=q^{rm}$.
\end{lemma}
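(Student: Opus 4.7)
The plan is to reduce everything to counting rowspaces and bases, since by \eqref{e-LIdeals} a left principal ideal is completely determined by its rowspace.

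First I would fix $V:=\rowspace(A)$, which is an $r$-dimensional subspace of $\F^m$, and observe that $RA$ consists precisely of all $m\times m$ matrices $B$ whose rows lie in $V$. Indeed, $B\in RA$ means $B=UA$ for some $U\in R$, which forces $\rowspace(B)\subseteq V$; conversely, given any matrix $B$ with rows in $V$, one can write each row of $B$ as a linear combination of the rows of $A$, which produces a $U$ with $UA=B$. Since an element of $RA$ is thus determined by a free choice of $m$ row vectors in $V$, we get $|RA|=|V|^m=q^{rm}$.

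Next I would count $\{B\in RA\mid\rank B=j\}$ by splitting over the possible rowspaces. If $\rank B=j$, then $\rowspace(B)$ is a $j$-dimensional subspace $W$ of $V$, and there are $\Gaussian{r}{j}$ choices for $W$. For a fixed $j$-dimensional subspace $W\subseteq\F^m$, the number of $m\times m$ matrices whose rowspace is \emph{exactly} $W$ equals the number of $m$-tuples $(b_1,\ldots,b_m)$ of vectors of $W$ that span $W$. After choosing any basis of $W$, writing each $b_i$ in coordinates gives a bijection with the set of $m\times j$ matrices over $\F$ of rank $j$, and it is a standard count that this set has cardinality $\prod_{i=0}^{j-1}(q^m-q^i)=\alpha_j(q^m)$ (the $i$-th row avoids the $q^i$ vectors in the span of the previous ones). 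Combining the two counts yields
\[
   s_j(m,r)=\Gaussian{r}{j}\,\alpha_j(q^m).
\]
The second form then follows from $\Gaussian{r}{j}=\alpha_j(q^r)/\alpha_j(q^j)$, which is immediate from \eqref{e-GaussCoeff} once one pulls the factor $q^i$ out of the numerators and denominators.

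Finally, the identity $\sum_{j=0}^r s_j(m,r)=q^{rm}$ is just a sanity check: the rank of any $B\in RA$ lies between $0$ and $r$, so the sets $\{B\in RA\mid\rank B=j\}$ for $j=0,\dots,r$ partition $RA$, and $|RA|=q^{rm}$ by the first step. There is no real obstacle here; the only point requiring a moment of care is the count of matrices with rowspace \emph{exactly} equal to a prescribed $W$, which is precisely where the factor $\alpha_j(q^m)$ enters and where it is important to verify that this number depends only on $\dim W$ and not on $W$ itself.
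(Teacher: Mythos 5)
Your proof is correct and, at its core, rests on the same identification as the paper's: an element of $RA$ is an arbitrary $m$-tuple of vectors in the $r$-dimensional space $\rowspace(A)$, whence $|RA|=q^{rm}$, and the rank-$j$ elements correspond to rank-$j$ matrices in $\F^{m\times r}$. The difference is in how that correspondence is set up and how the count is finished. The paper normalizes $A$ to $\Smallfourmat{I_r}{0}{0}{0}$ so that $RA=\{(M\,|\,0): M\in\F^{m\times r}\}$, and then simply cites the known formula for the number of rank-$j$ matrices in $\F^{m\times r}$ from Delsarte or Laksov--Thorup. You work coordinate-free with rowspaces (which avoids the ``without loss of generality'' reduction) and, more substantially, you \emph{prove} the counting formula rather than citing it, by summing over the $\Gaussian{r}{j}$ possible $j$-dimensional rowspaces $W\subseteq V$ and counting the spanning $m$-tuples of $W$. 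That makes your argument self-contained, at the cost of a little more work; it also makes transparent why $s_j(m,r)$ factors as $\Gaussian{r}{j}\alpha_j(q^m)$, which in the paper is just read off from the cited formula. One small inaccuracy: your parenthetical justification of $\alpha_j(q^m)$ as the number of rank-$j$ matrices in $\F^{m\times j}$ says ``the $i$-th \emph{row} avoids the $q^i$ vectors in the span of the previous ones,'' but for an $m\times j$ matrix with $m>j$ the rows cannot all be independent; the product $\prod_{i=0}^{j-1}(q^m-q^i)$ counts choices of the $j$ \emph{columns} (vectors in $\F^m$) kept linearly independent. The stated cardinality is correct, so this is a wording slip rather than a gap.
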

The statement is also true for $j=0$ because $\Gaussian{r}{0}=1=\alpha_0(q^i)$ for all~$r$ and~$i$.
We will normally just write $\alpha_j$ instead of $\alpha_{q,j}$, and make the subscript~$q$ explicit
only  in the next sections when more than one field size is involved.

\begin{proof}
It is easy to see that we may assume without loss of generality that
\[
    A=\begin{pmatrix}I_r & 0\\0&0\end{pmatrix}, \ \text{ where $I_r$ is the $(r\times r)$-identity matrix.}
\]
Then the left ideal $RA$ consists of all matrices of the form $(M\,|\,0)$, where~$M$ is any matrix in~$\F^{m\times r}$.
This proves $|RA|=q^{rm}$.
Next,
$|\{B\in RA\mid \rank B=j\}|=|\{M\in\F^{m\times r}\mid \rank M=j\}|$.
This cardinality is given by $s_j(m,r)$, see \cite[Eq.~(2.9)]{Del78} or \cite[Prop.~3.1]{LaTh94}.
\end{proof}

Now we can determine the normalized homogeneous weight for matrices in~$R$ in terms of the rank.
We need the Cauchy binomial theorem, which states that the polynomial $\alpha_r$ from Lemma~\ref{L-RAcard} satisfies
the identity $\alpha_r(x)=\sum_{j=0}^r (-1)^j q^{\Binom{j}{2}}\Gaussian{r}{j}x^{r-j}$, see~\cite[Eq.~(13)]{Gab85} or \cite[p.~23]{LaTh94}.
Using $\alpha_r(1)=0$ this implies
\begin{equation}\label{e-Qid}
  \sum_{j=0}^r (-1)^j q^{\Binom{j}{2}}\Gaussian{r}{j}=0 \text{ for }r\geq1.
\end{equation}

For the rest of this section let~$\omega$ be the normalized homogeneous weight on $R=\F^{m\times m}$.

\begin{theo}\label{T-HomWtMat}
\[
   \omega(A)=\frac{(-1)^{r+1}q^{\Binom{r}{2}}}{\alpha_r(q^m)}+1, \text{ where } r=\rank(A).
\]
\end{theo}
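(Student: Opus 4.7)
The plan is to exploit the constant-average-weight property of Remark~\ref{R-PropHomWt}(a) applied to the principal left ideals $RA$, together with the rank-decomposition of $RA$ given by Lemma~\ref{L-RAcard}, and then reduce the resulting identity to the Cauchy $q$-binomial identity~\eqref{e-Qid}.

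First I would argue that $\omega(A)$ depends only on $\rank(A)$. By property~(i) of Definition~\ref{D-homogWt}, $\omega$ is constant on left cosets modulo the left action of $R^{\ast}=\GL_m(\F)$, and by Remark~\ref{R-PropHomWt}(c), $\omega$ is also constant on right cosets modulo $\GL_m(\F)$. Since any two matrices of equal rank are related by $B=UAV$ for some $U,V\in\GL_m(\F)$, it follows that $\omega(A)=\omega(B)$ whenever $\rank(A)=\rank(B)$. Let $\omega_j$ denote the common value of $\omega$ on rank-$j$ matrices; note $\omega_0=0$.

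Next, fix $A$ of rank $r$ and apply Remark~\ref{R-PropHomWt}(a) to the left ideal $RA$. Partitioning $RA$ according to rank and using Lemma~\ref{L-RAcard}, we obtain, for every $r\in\{0,1,\ldots,m\}$,
\begin{equation}\label{e-system}
  \sum_{j=0}^{r} s_j(m,r)\,\omega_j \;=\; |RA| \;=\; q^{rm}.
\end{equation}
Since $s_r(m,r)\neq 0$, this triangular system determines the sequence $\omega_0,\omega_1,\ldots,\omega_m$ uniquely from $\omega_0=0$. So it suffices to verify that the values $\omega_j^{\star}:=1+\frac{(-1)^{j+1}q^{\binom{j}{2}}}{\alpha_j(q^m)}$ (which equal $0$ when $j=0$, since $\alpha_0(q^m)=1$) satisfy~\eqref{e-system}.

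For the verification, write $\omega_j^{\star}=1+c_j$ with $c_j=\frac{(-1)^{j+1}q^{\binom{j}{2}}}{\alpha_j(q^m)}$, so that $c_0=-1$. Using $\sum_{j=0}^{r}s_j(m,r)=q^{rm}$ from Lemma~\ref{L-RAcard}, equation~\eqref{e-system} reduces to
\[
  \sum_{j=0}^{r} s_j(m,r)\,c_j \;=\; 0.
\]
The crucial arithmetic identity is
\[
  \frac{s_j(m,r)}{\alpha_j(q^m)} \;=\; \frac{\alpha_j(q^r)}{\alpha_j(q^j)} \;=\; \Gaussian{r}{j},
\]
which follows directly from the formula for $s_j(m,r)$ in Lemma~\ref{L-RAcard}. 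Substituting, the identity to be proved becomes
\[
  \sum_{j=0}^{r}(-1)^{j+1}q^{\binom{j}{2}}\Gaussian{r}{j} \;=\; 0,
\]
which is exactly~\eqref{e-Qid} (with a sign flip). The main obstacle---if any---is recognizing that the seemingly intricate expression $\frac{(-1)^{r+1}q^{\binom{r}{2}}}{\alpha_r(q^m)}+1$ is precisely designed so that after multiplication by $s_j(m,r)$, the denominators telescope into Gaussian coefficients; once that cancellation is spotted, the Cauchy $q$-binomial identity finishes the argument.
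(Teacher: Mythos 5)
Your proposal is correct and follows essentially the same route as the paper: both rest on the cancellation $s_j(m,r)\cdot\frac{(-1)^{j+1}q^{\binom{j}{2}}}{\alpha_j(q^m)}=(-1)^{j+1}q^{\binom{j}{2}}\Gaussian{r}{j}$ followed by the Cauchy identity~\eqref{e-Qid}, the only cosmetic difference being that the paper verifies the candidate formula directly against the axioms of Definition~\ref{D-homogWt} and invokes uniqueness of the homogeneous weight, whereas you set up the equivalent triangular system for the true values $\omega_j$. One trivial slip: equation~\eqref{e-system} holds only for $r\ge 1$ (Remark~\ref{R-PropHomWt}(a) concerns nonzero ideals, and for $r=0$ the left side is $\omega_0=0\neq 1$), but since you anchor the recursion at $\omega_0=0$ separately this does not affect the argument.
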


\begin{proof}
We show that the map~$\omega$ as defined in the theorem satisfies the properties of a homogeneous weight.
First of all, it is clear that the zero matrix satisfies $\omega(0)=0$.
Secondly, if the matrices~$A,\,A'$ generate the same left ideal  they have the same rank, and thus
$\omega(A)=\omega(A')$.
It remains to show~(ii) of Definition~\ref{D-homogWt}.
Let $A\in R$ be of rank~$r\geq1$.
In view of Lemma~\ref{L-RAcard} we have to show that $\sum_{B\in RA}\omega(B)=q^{rm}$.
With the aid of  the same lemma we compute
\begin{align*}
  \sum_{B\in RA}\omega(B)&=
    \sum_{j=0}^r s_j(m,r)\Big(\frac{(-1)^{j+1}q^{\Binom{j}{2}}}{\alpha_j(q^m)}+1\Big)
     =\sum_{j=0}^r\Big((-1)^{j+1}q^{\Binom{j}{2}}\Gaussian{r}{j}+s_j(m,r)\Big)\\
    &=\sum_{j=0}^r s_j(m,r)+\sum_{j=0}^r(-1)^{j+1}q^{\Binom{j}{2}}\Gaussian{r}{j}
     =q^{rm},
\end{align*}
where the last step follows from~\eqref{e-Qid} and the fact that $\sum_{j=0}^r s_j(m,r)=q^{rm}$.
All of this shows that~$\omega$ is indeed the normalized homogeneous weight on~$R$.
\end{proof}

Theorem~\ref{T-HomWtMat} shows that matrices with the same rank have the same homogeneous weight.
This is also clear from the left- and right-invariance of the homogeneous weight.
As we show next, the specific formula for~$\omega(A)$ also implies the converse, that is, all matrices with the same homogeneous weight share the same rank.

\begin{cor}\label{C-HomRank}
Let $A,\,B\in R$. Then
$\omega(A)=\omega(B)$ if and only if $\rank(A)=\rank(B)$.
\end{cor}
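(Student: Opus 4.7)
The forward implication is immediate from Theorem~\ref{T-HomWtMat}, which expresses $\omega(A)$ as a function of $\rank A$ alone. For the converse my plan is to show that the map
\[
  r\longmapsto \omega_r:=\frac{(-1)^{r+1}q^{\Binom{r}{2}}}{\alpha_r(q^m)}+1
\]
is injective on $\{0,1,\ldots,m\}$. The strategy is first to separate the two parities of $r$ using the sign of $\omega_r-1$, and then, within a parity class, to reduce the hypothesised coincidence to a product identity that can be refuted factor by factor.

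For the parity split, note that $\alpha_r(q^m)>0$ for all $r\in\{0,\ldots,m\}$, so $\omega_r-1$ has the sign of $(-1)^{r+1}$: one finds $\omega_r>1$ exactly for odd $r$ and $\omega_r<1$ exactly for even $r$ (with $\omega_0=0$ fitting the latter). Hence $\omega_r=\omega_s$ with $r\neq s$ forces $r$ and $s$ to have the same parity, which, assuming $r<s$, gives $s-r\geq 2$ and consequently $r\leq m-2$. With the common sign cancelled, the equation $\omega_r=\omega_s$ reduces to $\alpha_s(q^m)/\alpha_r(q^m)=q^{\Binom{s}{2}-\Binom{r}{2}}$, that is,
\[
  \prod_{i=r}^{s-1}(q^m-q^i)\;=\;\prod_{i=r}^{s-1}q^i,
\]
and it remains to refute this identity.

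For that I would compare the two sides factor by factor. Since every $i$ in the range satisfies $i\leq s-1\leq m-1$, we have $q^m-q^i=q^i(q^{m-i}-1)\geq q^i$, with equality exactly when $q=2$ and $i=m-1$. For the particular index $i=r$ one has $r\leq m-2$, hence $q^{m-r}\geq q^2\geq 4$, so $q^m-q^r>q^r$ strictly. Therefore the left-hand product strictly exceeds the right-hand one, which is the sought contradiction.

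The only subtle point is the edge case $q=2$, $i=m-1$, where the two corresponding factors can coincide; the plan is to sidestep this by letting the guaranteed strict inequality at the index $i=r\leq m-2$ carry the argument. Beyond this bookkeeping the proof is entirely elementary once Theorem~\ref{T-HomWtMat} is in hand.
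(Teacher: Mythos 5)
Your proposal is correct and follows essentially the same route as the paper: the sign of $\omega_r-1$ forces $r\equiv s\pmod 2$, and the resulting identity $\prod_{i=r}^{s-1}(q^m-q^i)=\prod_{i=r}^{s-1}q^i$ (equivalently $\prod_{j=r}^{s-1}(q^{m-j}-1)=1$) is refuted; the paper does this by noting the product would have to involve at most one factor while having $s-r\geq 2$ factors, whereas you exhibit a strictly larger factor at $i=r$, which is the same contradiction in slightly different packaging.
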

\begin{proof}
It remains to prove the only-if-part.
Let $\omega(A)=\omega(B)$ and set $r:=\rank(A)$ and $s:=\rank(B)$.
Assume $r< s$.
Then
\[
    (-1)^{r+1}\frac{q^{\Binom{r}{2}}}{\alpha_r(q^m)}=(-1)^{s+1}\frac{q^{\Binom{s}{2}}}{\alpha_s(q^m)},
\]
and thus $r\equiv s~\mod 2$.
Then we obtain
\[
  q^{\Binom{s}{2}-\Binom{r}{2}}=\frac{\alpha_s(q^m)}{\alpha_r(q^m)}=\prod_{i=r}^{s-1}(q^m-q^i)=q^{\sum_{i=r}^{s-1}i}\prod_{j=r}^{s-1}(q^{m-j}-1)
  =q^{\frac{(s-1)s}{2}-\frac{(r-1)r}{2}}\prod_{j=r}^{s-1}(q^{m-j}-1).
\]
Hence we conclude that $\prod_{j=r}^{s-1}(q^{m-j}-1)=1$.
This implies, in particular, that the product involves at most one factor.
But this contradicts that $r\equiv s~\mod 2$.
\end{proof}

The previous result shows that the partition of~$\F^{m\times m}$ induced by the homogeneous weight coincides with the partition induced by the rank.
We will discuss these partitions in more detail in the next sections and therefore introduce now the necessary notation.

\medskip

A \emph{partition}~$\cP$ of~$R$ is a collection of non-empty and disjoint sets that cover~$R$.
The sets will be called \emph{blocks}, and the number of blocks in a partition~$\cP$ is denoted by
$|\cP|$.
If~$\cP$ consists of the~$M$ blocks~$P_i$, we will write $\cP=P_1\mid P_2\mid\ldots\mid P_M$.
Two partitions~$\cP$ and~$\cQ$ of~$R$ are \emph{equal} if their blocks coincide up to ordering.
The partition~$\cP$ is called \emph{finer} than the partition~$\cQ$, written $\cP\leq\cQ$, if every block of~$\cP$
is contained in a block of~$\cQ$.
We denote by $\widesim_{\cP}$ the equivalence relation induced by~$\cP$, thus $x\widesim_{\cP}y$ if and only
if~$x$ and~$y$ belong to the same block of~$\cP$.

\begin{defi}\label{D-Invpart}
A partition~$\cP=P_1\mid P_2\mid\ldots\mid P_M$ of~$R$ is called \emph{invariant} if each block of~$\cP$
is invariant under left or right multiplication by units, that is, $uP_m=P_m=P_mu$ for each $u\in R^*$.
\end{defi}

The following partition will be at the focus of the next sections.
\begin{defi}\label{D-homPart}
For any finite Frobenius ring~$R$, we denote by~$\cPhom$ the partition of~$R$ induced by the homogeneous
weight~$\omega$.
Thus, $x\widesim_{\cPhom} y\Longleftrightarrow \omega(x)=\omega(y)$ for all $x,y\in R$.
\end{defi}

The basic properties of the homogeneous weight and the previous result now translate as follows.
\begin{rem}\label{R-PhomFmm}
\begin{alphalist}
\item For any finite Frobenius ring~$R$, the partition $\cPhom$ is invariant.
\item If $R$ is the matrix ring $\F^{m\times m}$, then $\cPhom=\cPrank$, where~$\cPrank$ is the partition induced
         by the rank, i.e., $A\widesim_{\cPrank}B\Longleftrightarrow \rank(A)=\rank(B)$.
\end{alphalist}
\end{rem}

\section{The Homogeneous Weight on General Frobenius Rings}\label{SS-GenFrob}
In this section we first consider direct products of matrix rings over fields before moving on to general Frobenius rings.
For the first class of rings, the values of the homogeneous weight can easily be derived from the previous section
and a well known formula for direct product rings.
The general case then follows from Theorem~\ref{T-homogWtsoc}.

Throughout this section, any homogeneous weight is meant to be normalized.

%
%

\begin{theo}\label{T-homogWtProdmat}
Consider the ring~$R=R_1\times\ldots\times R_t$, where $R_i=(\F_{q_i})^{m_i\times m_i}$.
Then the homogeneous weight~$\omega$ on~$R$ is given by
\[
   \omega(A_1,\ldots,A_t)=
   1-\prod_{i=1}^t\frac{(-1)^{r_i}q_i^{\Binom{r_{i}}{2}}}{\alpha_{q_i,r_i}(q_i^{m_i})},\text{ where }r_i=\rank(A_i).
\]
\end{theo}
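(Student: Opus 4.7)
The plan is to reduce to the single-factor case of Theorem~\ref{T-HomWtMat} via a multiplicative product formula for the normalized homogeneous weight on a direct product of finite Frobenius rings. Concretely, for any direct product $R = R_1 \times \ldots \times R_t$ of finite Frobenius rings with normalized homogeneous weights $\omega_i$ on $R_i$, I claim that the normalized homogeneous weight $\omega$ on $R$ satisfies
\[
  1 - \omega(a_1, \ldots, a_t) \;=\; \prod_{i=1}^{t}\bigl(1 - \omega_i(a_i)\bigr).
\]

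To establish this product formula, I would invoke the character-theoretic expression from Remark~\ref{R-PropHomWt}(b). Two standard facts are needed: first, $R^* = R_1^* \times \ldots \times R_t^*$; and second, if $\chi_i$ is a generating character of $R_i$, then $\chi(a_1, \ldots, a_t) := \prod_i \chi_i(a_i)$ is a generating character of $R$ (this is the standard construction for products of Frobenius rings, see Wood's reference cited in the paper). Substituting these into the formula
\[
  \omega(r) \;=\; 1 - \frac{1}{|R^*|}\sum_{u \in R^*}\chi(ru)
\]
causes the sum over $R^*$ to factor as a product of sums over the $R_i^*$, and each individual factor equals $1 - \omega_i(a_i)$ by the same formula applied to $R_i$. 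Rearranging gives the claimed identity.

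Given the product formula, the theorem is immediate. By Theorem~\ref{T-HomWtMat} applied to each matrix ring $R_i = (\F_{q_i})^{m_i \times m_i}$, we have
\[
  1 - \omega_i(A_i) \;=\; \frac{(-1)^{r_i}\, q_i^{\binom{r_i}{2}}}{\alpha_{q_i, r_i}(q_i^{m_i})}, \qquad r_i = \rank(A_i),
\]
and substituting into the product formula yields the desired expression for $\omega(A_1, \ldots, A_t)$.

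The main (and only) non-routine step is the product formula itself; once the character formula of Remark~\ref{R-PropHomWt}(b) is at hand and the multiplicative structure of generating characters and unit groups of product rings is recognized, the factorization is essentially automatic. No induction is even needed if one writes the full $t$-fold product directly. I do not anticipate any real obstacle beyond naming these standard facts carefully.
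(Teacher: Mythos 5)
Your proof is correct and follows essentially the same route as the paper: the paper's proof consists of exactly this combination of Theorem~\ref{T-HomWtMat} with the product formula $1-\omega(a_1,\ldots,a_t)=\prod_i(1-\omega_i(a_i))$, which it simply cites from the literature (\cite[Lem.~7]{BKS12}, \cite[Prop.~3.7]{GL13homog}) rather than proving. Your character-theoretic derivation of that product formula from Remark~\ref{R-PropHomWt}(b), using $R^*=R_1^*\times\ldots\times R_t^*$ and the fact that a product of generating characters is a generating character (as also noted in the proof of Theorem~\ref{T-CommChar}), is a valid way to fill in the citation.
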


\begin{proof}
This follows from Theorem~\ref{T-HomWtMat} along with the product formula for the homogeneous weight on a direct
product of rings, as it can be found in \cite[Lem.~7]{BKS12} or \cite[Prop.~3.7]{GL13homog}.
\end{proof}

Now Theorem~\ref{T-homogWtsoc} leads to the following summary of our findings.
\begin{theo}\label{T-RFrob}
Let~$R$ be a finite Frobenius ring.
Then there exists a ring isomorphism
\begin{equation}\label{e-phi}
  \phi:\;R/\rad(R)\longrightarrow (\F_{q_1})^{m_1\times m_1}\times\ldots\times(\F_{q_t})^{m_t\times m_t},
\end{equation}
where $q_1,\ldots,q_t$ are suitable prime powers and $m_1,\ldots,m_t$ positive integers.
Denote by $\pi_i$ the projection of the product ring on the $i$-th component $(\F_{q_i})^{m_i\times m_i}$, and let
$\psi_i=\pi_i\circ\phi\circ\psi$, where~$\psi$ is the left $R$-module isomorphism in~\eqref{e-psi}.
Then the homogeneous weight~$\omega$ on~$R$ is given by
\[
    \omega(x)=\left\{\begin{array}{cl}
               {\DS 1-\prod_{i=1}^t\frac{(-1)^{r_i}q_i^{\Binom{r_{i}}{2}}}{\alpha_{q_i,r_i}(q_i^{m_i})}},&
                   \text{ if $x\in\soc(R)$ and $r_i=\rank(\psi_i(x))$ for } i=1,\ldots,t,\\[3.5ex]
                   1,&\text{ if }x\not\in\soc(R).\end{array}\right.
\]
As a consequence, $R\,\backslash\,\soc(R)$ is a block of the partition~$\cPhom$.
\end{theo}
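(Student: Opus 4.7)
The plan is to assemble the statement from ingredients already developed in the paper. The existence of the ring isomorphism~$\phi$ is immediate: since $R/\rad(R)$ is finite and semisimple, the Wedderburn-Artin decomposition recorded in~\eqref{e-RradR}, together with the fact that every finite division ring is a field, produces such a~$\phi$. Moreover, the equality $\omega(x)=1$ for $x\notin\soc(R)$ is already the content of Theorem~\ref{T-homogWtsoc}, so only the formula on~$\soc(R)$ and the final block-claim still require work.

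For $x\in\soc(R)$, I would use Theorem~\ref{T-homogWtsoc} to write $\omega(x)=\tilde\omega(\psi(x))$, and then transport~$\tilde\omega$ through the ring isomorphism~$\phi$ onto the product of matrix rings. The key observation is that a ring isomorphism between finite Frobenius rings carries the normalized homogeneous weight of one to that of the other: it sends principal left ideals bijectively to principal left ideals of equal cardinality, so conditions~(i) and~(ii) of Definition~\ref{D-homogWt} are preserved, and the uniqueness of the normalized homogeneous weight then forces $\tilde\omega=\omega_{\rm prod}\circ\phi$, where $\omega_{\rm prod}$ denotes the weight on the product ring. Applying Theorem~\ref{T-homogWtProdmat} to $\omega_{\rm prod}(\phi(\psi(x)))$ with $r_i=\rank(\pi_i(\phi(\psi(x))))=\rank(\psi_i(x))$ then yields the stated formula.

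For the final assertion that $R\setminus\soc(R)$ is a single block of~$\cPhom$, it suffices to rule out $\omega(x)=1$ for $x\in\soc(R)$. By the formula just obtained, this would force the product $\prod_{i=1}^t (-1)^{r_i}q_i^{\Binom{r_i}{2}}/\alpha_{q_i,r_i}(q_i^{m_i})$ to vanish, but each factor is nonzero: the numerator $(-1)^{r_i}q_i^{\Binom{r_i}{2}}$ is clearly nonzero, and the denominator $\alpha_{q_i,r_i}(q_i^{m_i})=\prod_{j=0}^{r_i-1}(q_i^{m_i}-q_i^j)$ is nonzero because $r_i\leq m_i$ for any rank of an $m_i\times m_i$ matrix. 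Combined with $\omega(y)=1$ for all $y\notin\soc(R)$, this pins down $R\setminus\soc(R)$ as one block of~$\cPhom$.

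The one non-routine step is the transport-of-weight claim in the middle paragraph, namely that a ring isomorphism between finite Frobenius rings preserves the normalized homogeneous weight. This is a short argument based purely on the characterization of~$\omega$ through principal left ideals and the uniqueness statement of~\cite[Th.~1.3]{GrSch00}; once it is noted, the rest of the proof is direct bookkeeping.
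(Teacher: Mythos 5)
Your proof is correct and follows the same route the paper takes: the paper states this theorem as an immediate consequence of Theorem~\ref{T-homogWtsoc} (reduction to the socle via $\psi$) and Theorem~\ref{T-homogWtProdmat} (the formula on the Wedderburn--Artin product), giving no further details. You have merely made explicit two routine steps the paper leaves implicit, namely the transport of the normalized homogeneous weight along the ring isomorphism $\phi$ via uniqueness, and the nonvanishing of the product that pins down $R\setminus\soc(R)$ as a block; both are correct.
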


The result allows us to characterize the rings that contain nonzero elements with zero homogeneous weight.

\begin{cor}\label{C-HomogWtZero}
Let~$R$ be a finite Frobenius ring. Then~$R$ contains a nonzero element with zero homogeneous weight if and
only if~$\F_2$ is a factor of multiplicity~$2$ in the Wedderburn-Artin decomposition of $R/\rad(R)$.
In other words, there exists some $x\in R\backslash\{0\}$ such that $\omega(x)=0$
if and only if $(q_i,\,m_i)=(2,1)$ for at least two values of~$i\in\{1,\ldots,t\}$ in~\eqref{e-phi}.
\end{cor}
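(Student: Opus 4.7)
The plan is to apply Theorem~\ref{T-RFrob} directly. Since $\omega(x)=1$ for every $x\notin\soc(R)$, any nonzero element of zero weight must lie in $\soc(R)$. For $x\in\soc(R)$ with $r_i=\rank(\psi_i(x))$, the equation $\omega(x)=0$ is equivalent to
\[
\prod_{i=1}^t\frac{(-1)^{r_i}q_i^{\binom{r_i}{2}}}{\alpha_{q_i,r_i}(q_i^{m_i})}=1.
\]
The first step is to simplify each factor. Pulling out powers of $q_i$ from $\alpha_{q_i,r_i}(q_i^{m_i})=\prod_{j=0}^{r_i-1}(q_i^{m_i}-q_i^{j})=q_i^{\binom{r_i}{2}}\prod_{j=0}^{r_i-1}(q_i^{m_i-j}-1)$ lets the $i$-th factor be rewritten as $(-1)^{r_i}/\prod_{j=0}^{r_i-1}(q_i^{m_i-j}-1)$, with the empty-product convention giving $1$ when $r_i=0$.

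Next, I would analyze the size of each factor. For $r_i\geq 1$, every term $q_i^{m_i-j}-1$ in the denominator is a positive integer, and is at least $q_i-1\geq 1$; moreover, it equals $1$ precisely when $q_i=2$ and $m_i-j=1$. Therefore $\prod_{j=0}^{r_i-1}(q_i^{m_i-j}-1)\geq 1$, with equality iff every factor is $1$, which forces $r_i=1$, $q_i=2$ and $m_i=1$. This shows that each factor has absolute value at most $1$, with equality iff $r_i=0$ or $(q_i,m_i,r_i)=(2,1,1)$, and in the latter case the factor equals $-1$. Consequently the product of the factors equals $1$ only if every ``active'' index $i$ (with $r_i\geq 1$) satisfies $(q_i,m_i)=(2,1)$ and $r_i=1$; in that situation the product equals $(-1)^k$, where $k$ is the number of active indices. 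For the value $+1$ we need $k$ even, and for $x\neq 0$ we need $k\geq 1$, hence $k\geq 2$. This proves one direction: a nonzero element of zero weight exists only if at least two components in the Wedderburn-Artin decomposition of $R/\rad(R)$ equal $\F_2$.

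For the converse, suppose indices $i_1\neq i_2$ both satisfy $(q_{i_j},m_{i_j})=(2,1)$. Since $\phi\circ\psi$ is an isomorphism from $\soc(R)$ onto the product ring $\prod_i(\F_{q_i})^{m_i\times m_i}$, I can pick the unique $x\in\soc(R)$ whose image has entry $1\in\F_2$ in components $i_1,i_2$ and $0$ elsewhere. Then $x\neq 0$, and by the formula the corresponding product equals $(-1)^2=1$, so $\omega(x)=0$. The main subtlety is the size estimate of the denominator factors, but this reduces to the elementary inequality $q^{m-j}-1\geq 1$ together with its equality case, so the argument is essentially bookkeeping once the factors have been recast in the simplified form above.
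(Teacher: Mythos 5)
Your proposal is correct, and both directions reach the same conclusion as the paper, but your treatment of the ``only if'' direction is genuinely different in mechanism. The paper argues via divisibility: for an index with $m_i\geq 2$ and $r_i>0$ it observes that the denominator acquires the factor $q_i^{m_i}-1$, claims this is relatively prime to the numerator, and derives a contradiction; it then handles the remaining indices with $m_i=1$ by a separate case analysis of $\alpha_{q_i,1}(q_i)=q_i-1$ and the sign. You instead cancel $q_i^{\binom{r_i}{2}}$ against $\alpha_{q_i,r_i}(q_i^{m_i})=q_i^{\binom{r_i}{2}}\prod_{j=0}^{r_i-1}(q_i^{m_i-j}-1)$ and run a magnitude argument: each factor of the product has modulus at most $1$, with equality exactly when $r_i=0$ or $(q_i,m_i,r_i)=(2,1,1)$, after which the sign $(-1)^k$ forces $k\geq 2$ active coordinates. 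Your route buys two things: it treats all indices uniformly (no split into $m_i\geq2$ versus $m_i=1$), and it is immune to a subtlety in the coprimality step --- $q_i^{m_i}-1$ is prime to $q_i$ but need not be prime to the other $q_j$ appearing in the numerator when $t>1$, so the magnitude estimate is the more robust way to see that the denominator genuinely dominates. The converse construction (an element of the socle supported on two $\F_2$-components) is identical to the paper's.
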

\begin{proof}
Note that $\omega(x)=0$ iff
\begin{equation}\label{e-Product}
   \prod_{i=1}^t\frac{(-1)^{r_i}q_i^{\Binom{r_{i}}{2}}}{\alpha_{q_i,r_i}(q_i^{m_i})}
\end{equation}
is~$1$, where $r_i=\rank(\psi_i(x))$ as in the theorem.
\\
``$\Leftarrow$'' Assume $(q_1,m_1)=(q_2,m_2)=(2,1)$.
Let $x\in\soc(R)\backslash\{0\}$ such that $\psi_i(x)=0$ iff $i>2$.
Such~$x$ certainly exists due to the surjectivity of~$\phi$.
Since $\alpha_{2,1}(2)=1$, the product in~\eqref{e-Product} is~$1$, as desired.
\\
``$\Rightarrow$''
Assume there exists $x\neq0$ such that the product in~\eqref{e-Product} is~$1$. Fix~$i$ such that $m_i\geq2$.
Then if $r_i>0$, the expression $\alpha_{q_i,r_i}(q_i^{m_i})$ in the denominator
of~\eqref{e-Product} contains the factor $(q_i^{m_i}-1)$.
Since this factor is relatively prime to the numerator of~\eqref{e-Product}, this contradicts our assumption that the
product be~$1$, and we conclude $r_i=0$ whenever $m_i\geq2$.
Since~$x\neq0$, this implies that we must have at least one factor where $m_i=r_i=1$.
If $m_i=r_i=1$, then $\alpha_{q_i,r_i}(q_i^{m_i})=q_i-1$, and since~\eqref{e-Product} is~$1$, we conclude
$q_i=2$.
But then the factor $(-1)^{r_i}$ forces that there be at least two instances where $m_i=r_i=1$ and $q_i=2$.
This concludes the proof.
\end{proof}

\begin{exa}\label{R-locProd}
Suppose $R=R_1\times\ldots\times R_t$, where $R_i$ is a local Frobenius ring for all $i=1,\ldots,t$.
Then $R/\rad(R)\cong \prod_{i=1}^t R_i/\rad(R_i)$, where each component $R_i/\rad(R_i)$ is the residue field of~$R_i$.
So, $m_1=\ldots=m_t=1$ in the situation of Theorem~\ref{T-RFrob}.
Specify~\eqref{e-phi} to
\[
   \phi:\; R/\rad(R)\longrightarrow (\F_{q_1}\times\ldots\times\F_{q_1})\times\ldots\times(\F_{q_s}\times\ldots\times\F_{q_s}),
\]
where $q_1,\ldots,q_s$ are distinct, and the field $\F_{q_i}$ appears~$n_i$ times in the product.
Then we obtain for $x\in\soc(R)$ the formula $\omega(x)=1-\prod_{i=1}^s\Big(\frac{-1}{q_i-1}\Big)^{\wt(a_i)}$,  where
$(\phi\circ\psi)(x)=(a_1,\ldots,a_s)$  and $a_i=(a_{i,1},\ldots,a_{i,n_i})$
and where $\wt$ denotes the Hamming weight on each component $\F_{q_i}\times\ldots\times\F_{q_i}$.
This recovers~\cite[Th.~3.9]{GL13homog}.
\end{exa}

We close this section with two examples where we determine the homogeneous weight partition.
They will be revisited in the next section.

\begin{exa}\label{E-HomWt1}
Let $\F=\F_q$ and consider the ring $R=\F^{2\times2}\times \F^{2\times 2}$.
Theorem~\ref{T-homogWtProdmat} provides us with
\begin{equation}\label{e-wFq2}
    \omega(A_1,A_2)=1-\prod_{i=1}^2\frac{(-1)^{r_i}q^{\Binom{r_i}{2}}}{\alpha_{r_i}(q^2)},\text{ where }r_i=\rank(A_i).
\end{equation}
Let~$\cPhom$ be the induced homogeneous weight partition of~$R$.
Moreover, let $\cPrank=P_0\mmid P_1\mmid P_2$ be the rank partition of $\F^{2\times 2}$, thus
$P_i=\{A\in\F^{2\times 2}\mid\rank(A)=i\}$.
Define $\cQ:=(\cPrank)^2_{\text{sym}}$ to be the symmetrized
product partition of~$R$ induced by~$\cPrank$, that is,
its blocks are given by the pairs of matrices with the same ranks up to ordering
(see also \cite[Def.~3.2]{GL13pos} for general symmetrized product partitions).
To be precise, we index the blocks of~$\cQ$ by the multisets
$\{\!\{0,0\}\!\},\,\{\!\{0,1\}\!\},\,\{\!\{0,2\}\!\},\,\{\!\{1,1\}\!\},\,\{\!\{1,2\}\!\}$, $\{\!\{2,2\}\!\}$ so that
$Q_{\{\!\{i,j\}\!\}}$ consists of all matrix pairs with one matrix having rank~$i$ and
the other one having rank~$j$, regardless of the order.

We show now that $\cPhom=\cQ$ for $q>2$, while $\cPhom>\cQ$ for $q=2$.
It is clear from~\eqref{e-wFq2} that matrix pairs in the same block of~$\cQ$ have the same homogeneous weight.
In other words $\cQ\leq\cPhom$.
The values of the homogeneous weight on the blocks of~$\cQ$ are given by (in the above ordering of the multisets)
\[
   0,\, 1-\frac{-1}{q^2-1},\,1-\frac{1}{(q^2-1)(q-1)},\,1-\frac{1}{(q^2-1)^2},\,1-\frac{-1}{(q^2-1)^2(q-1)},\,1-\frac{1}{(q^2-1)^2(q-1)^2}.
\]
For $q>2$ these values are obviously distinct, and thus $\cPhom=\cQ$.
In other words, the homogeneous weight partition of~$R$ is given by the symmetrized rank partition.
For $q=2$, the matrix pairs in $Q_{\{\!\{1,1\}\!\}}\cup Q_{\{\!\{2,2\}\!\}}$ all have the same homogeneous weight, $\frac{8}{9}$,
and we obtain $\cPhom=Q_{\{\!\{0,0\}\!\}}\mmid Q_{\{\!\{0,1\}\!\}}\mmid Q_{\{\!\{0,2\}\!\}}\mmid Q_{\{\!\{1,2\}\!\}}\mmid Q_{\{\!\{1,1\}\!\}}\cup Q_{\{\!\{2,2\}\!\}}$.
Hence $\cQ<\cPhom$.
\end{exa}
\begin{exa}\label{E-HomWt3}
Let $R=\F^{2\times 2}\times\F$ where $\F=\F_q$.
Then the homogeneous weight of $(A,a)\in R$ is given by
\begin{equation}\label{e-RkWt}
    \omega(A,a)=1-\frac{(-1)^{r_1}q^{\Binom{r_1}{2}}(-1)^{r_2}}{\alpha_{r_1}(q^2)\alpha_{r_2}(q)},
    \text{ where }r_1=\rank(A)\text{ and }r_2=\wt(a)
\end{equation}
(and where wt is the Hamming weight, which equals the rank).
Thus pairs with the same rank and weight have the same homogeneous weight.
Let~$\cPrank$ again be the rank partition of~$\F^{2\times 2}$ and~$\cH$ be the Hamming weight partition of~$\F$.
Define~$\cQ$ as the product partition $\cPrank\times\cH$
(see also \cite[Def.~3.1]{GL13pos} for general product partitions).
Then $\cQ$ consists of the blocks
\[
  P_{(r_1,r_2)}:=\{(A,a)\mid \rank(A)=r_1,\,\wt(a)=r_2\} \text{ for all }(r_1,r_2)\in\{0,1,2\}\times\{0,1\}.
\]
As in previous example,~$\cQ\leq\cPhom$ due to\eqref{e-RkWt}.
But different from the previous example, we now observe  that the partition~$\cPhom$ is strictly coarser than~$\cQ$ for any~$q$.
Indeed, all pairs
in $P_{(1,1)}\cup P_{(2,0)}$ have homogeneous weight $1-\frac{1}{(q^2-1)(q-1)}$.
Checking all values of $\omega(A,a)$ in~\eqref{e-RkWt}, one arrives at the partition
\begin{equation}\label{e-PhomF22F}
   \cPhom=
   \left\{\begin{array}{ll}
      P_{(0,0)}\mmidbig P_{(0,1)}\mmidbig P_{(1,0)}\mmidbig P_{(2,1)}\mmidbig P_{(1,1)}\cup P_{(2,0)},&\text{ if }q>2,\\[1.8ex]
      P_{(0,0)}\mmidbig P_{(0,1)}\mmidbig P_{(1,0)}\cup P_{(2,1)} \mmidbig P_{(1,1)}\cup P_{(2,0)},&\text{ if }q=2.
   \end{array}\right.
\end{equation}
\end{exa}

\section{The Dual Partition of $\cPhom$}
In this section we will investigate the character-theoretic dual of the partition~$\cPhom$.
This dualization is at the heart of MacWilliams identities for the appropriate partition enumerators of codes
and their left or right dual.
The complex numbers $\sum_{a\in P_m}\chi(ab)$ and $\sum_{a\in P_m}\chi(ba)$, defined below, are
the left and right Krawtchouk coefficients and determine the MacWilliams transformation between the
partition enumerators.
See the introduction for further details and literature on this topic.
We follow the notation from~\cite{GL13pos}, where a general approach in the terminology of this paper has
been presented.

In this section we show that the left and right Krawtchouk coefficients coincide for the homogeneous weight partition,
and therefore the left and right dual partitions of $\cPhom$ coincide as well.
We will also provide an example showing that this is not true in general for invariant partitions.

As before, let~$R$ be a finite Frobenius ring and fix a generating character~$\chi$ of~$R$.

\begin{defi}[\mbox{\cite[Def.~2.1, Def.~4.1]{GL13pos}}]\label{D-DualPart}
Let~$\cP=P_1\mmid\ldots\mmid P_M$ be a partition of~$R$.
The \emph{left and right $\chi$-dual partition} of~$\cP$, denoted
by~$\wcP^{\scriptscriptstyle[\chi,l]}$ and ~$\wcP^{\scriptscriptstyle[\chi,r]}$, are defined by
the equivalence relations
\begin{equation}\label{e-simPhat2l}
  b\widesim_{\wcP^{\scriptscriptstyle[\chi,l]}} b' :\Longleftrightarrow
  \sum_{a\in P_m}\chi(ab)=\sum_{a\in P_m}\chi(ab') \text{ for all }m=1,\ldots,M
\end{equation}
and
\begin{equation}\label{e-simPhat2r}
  b\widesim_{\wcP^{\scriptscriptstyle[\chi,r]}} b' :\Longleftrightarrow
  \sum_{a\in P_m}\chi(ba)=\sum_{a\in P_m}\chi(b'a) \text{ for all }m=1,\ldots,M.
\end{equation}
$\cP$ is called \emph{$\chi$-self-dual} if $\cP=\wcPchil$ and \emph{reflexive} if
$\cP=\widehat{\phantom{\Big|}\hspace*{1.4em}}^{\,\scriptscriptstyle[\chi,r]}\hspace*{-3.5em}\widehat{\cP}^{^{\scriptscriptstyle[\chi,l]}}\quad $.
The sums $\sum_{a\in P_m}\chi(ab)$ (resp.\ $\sum_{a\in P_m}\chi(ba)$) for $b\in R,\,m=1,\ldots,M$,
are called the \emph{left $\chi$-Krawtchouk coefficients} (resp.\  \emph{right $\chi$-Krawtchouk coefficients}).
\end{defi}

It has been shown in \cite[Prop.~4.4]{BGL13} that $\chi$-self-duality does not depend on the sidedness of the dual partition, that is, $\cP=\wcPchil$ iff $\cP=\wcPchir$.
Moreover, reflexivity does not depend on the order of left and right duals taken and neither does it depend on the choice
of~$\chi$.
The dual partitions themselves do depend on the choice of~$\chi$ in general.
However, we have the following positive result.
\begin{rem}\label{R-InvPartDual}
Let~$\cP=P_1\mid\ldots\mid P_M$ be an invariant partition; see Definition~\ref{D-Invpart}.
Then the left (resp.\ right) Krawtchouk coefficients do not depend on the choice of~$\chi$, and thus neither do the left and right dual partition.
This can be seen as follows.
Let~$\chi,\,\chi'$ be two generating characters of~$R$.
Then $\chi'=u\!\cdot\!\chi=\chi\!\cdot\!v$ for some units $u,\,v\in R^*$ (see Theorem~\ref{T-Frob}).
Using~\eqref{e-bimodule} we obtain for any $b\in R$
\begin{equation}\label{e-chichi'}
  \sum_{a\in P_m}\chi'(ab)=\sum_{a\in P_m}\chi(vab)=\sum_{a'\in vP_m}\chi(a'b)=\sum_{a\in P_m}\chi(ab).
\end{equation}
\end{rem}

\medskip
\begin{cor}[see also \mbox{\cite[Rem.~3.3]{GL13homog}}]\label{C-DualPhomchi}
For the homogeneous weight partition~$\cPhom$ of~$R$, the left Krawt\-chouk coefficients and the left dual partition do not depend on the choice of the generating character~$\chi$.
The same is true for the right side.
We will therefore simply write $\wcPhoml$ and $\wcPhomr$ for these dual partitions.
\end{cor}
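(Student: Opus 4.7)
The plan is very short because the corollary is an almost immediate combination of two facts already established in the excerpt. First I would observe that by Remark~\ref{R-PhomFmm}(a), the partition $\cPhom$ is invariant in the sense of Definition~\ref{D-Invpart}: each block is closed under left and right multiplication by units of $R$. This invariance is itself a consequence of property (i) of the homogeneous weight together with the fact that $Rx=R(ux)$ and $xR=(xu)R$ for any unit $u$, so elements in the same two-sided orbit of $R^*$ automatically have the same homogeneous weight.

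Next I would invoke Remark~\ref{R-InvPartDual}, which asserts that for any invariant partition the left $\chi$-Krawtchouk coefficients are independent of the choice of generating character $\chi$, and analogously on the right. The argument given there is the short calculation
\[
  \sum_{a\in P_m}\chi'(ab)=\sum_{a\in P_m}\chi(vab)=\sum_{a'\in vP_m}\chi(a'b)=\sum_{a\in P_m}\chi(a b),
\]
using $\chi'=\chi\!\cdot\!v$ for some $v\in R^*$ together with $vP_m=P_m$; the analogous manipulation with $\chi'=u\!\cdot\!\chi$ and $P_m u=P_m$ handles the right side.

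Applying this to $\cP=\cPhom$ yields that the left (resp.\ right) Krawtchouk coefficients of $\cPhom$ depend only on $\cPhom$ itself, not on $\chi$. Since the equivalence relation defining $\wcPchil$ (resp.\ $\wcPchir$) in Definition~\ref{D-DualPart} is determined solely by these Krawtchouk coefficients, the left (resp.\ right) dual partition of $\cPhom$ is also independent of $\chi$, which justifies the unambiguous notation $\wcPhoml$ and $\wcPhomr$. There is no real obstacle here; the only thing to double-check is that the invariance in Remark~\ref{R-PhomFmm}(a) is genuinely two-sided (left and right), which it is because the homogeneous weight on a Frobenius ring is simultaneously left and right homogeneous by Remark~\ref{R-PropHomWt}(c), so $\omega(ux)=\omega(x)=\omega(xu)$ for all $u\in R^*$.
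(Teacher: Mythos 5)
Your proof is correct and follows exactly the paper's route: the corollary is stated as an immediate consequence of Remark~\ref{R-InvPartDual} applied to the invariant partition $\cPhom$ (Remark~\ref{R-PhomFmm}(a)), which is precisely what you do, including the correct left/right bookkeeping with $\chi'=\chi\!\cdot\!v$ versus $\chi'=u\!\cdot\!\chi$.
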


The goal of this section is to show that the left Krawtchouk coefficients coincide with the right Krawtchouk coefficients and thus
$\wcPhoml=\wcPhomr$.
Before addressing this question, we briefly sketch an example showing that this is not the case for general invariant partitions.
\begin{exa}\label{E-LRDual}
Consider the ring
\[
   R:=\left\{\begin{pmatrix}a&0&0&0\\0&a&b&0\\0&0&c&0\\d&0&0&c\end{pmatrix}\,\right|\,
      \left.\begin{matrix}\ \\ \ \\ \ \\ \ \end{matrix}\hspace*{-.6em} a,b,c,d\in\F_2\right\}
\]
(see also~\cite[Ex.~1.4(iii)]{Wo99}).
The ring is Frobenius, and a generating character is given by~$\chi$, defined by mapping the above matrix
to $(-1)^{a+b+c+d}$.
Consider the sets $P_0=\{0\},\;P_1=R^*,\;P_2=R^*A_1R^*\cup \{A_2\}$, and
$P_3=R^*B_1R^*\cup \{B_2,\,B_3\}$, where
\[
  A_1=\begin{pmatrix}0\!&\!0\!&\!0\!&\!0\\0\!&\!0\!&\!0\!&\!0\\0\!&\!0\!&\!1\!&\!0\\0\!&\!0\!&\!0\!&\!1\end{pmatrix}\!,
  A_2=\begin{pmatrix}0\!&\!0\!&\!0\!&\!0\\0\!&\!0\!&\!0\!&\!0\\0\!&\!0\!&\!0\!&\!0\\1\!&\!0\!&\!0\!&\!0\end{pmatrix}\!,
  B_1=\begin{pmatrix}1\!&\!0\!&\!0\!&\!0\\0\!&\!1\!&\!0\!&\!0\\0\!&\!0\!&\!0\!&\!0\\0\!&\!0\!&\!0\!&\!0\end{pmatrix}\!,
  B_2=\begin{pmatrix}0\!&\!0\!&\!0\!&\!0\\0\!&\!0\!&\!1\!&\!0\\0\!&\!0\!&\!0\!&\!0\\0\!&\!0\!&\!0\!&\!0\end{pmatrix}\!,
  B_3=\begin{pmatrix}0\!&\!0\!&\!0\!&\!0\\0\!&\!0\!&\!1\!&\!0\\0\!&\!0\!&\!0\!&\!0\\1\!&\!0\!&\!0\!&\!0\end{pmatrix}\!.
\]
It is easy to check that~$\cP=P_0\mid P_1\mid P_2\mid P_3$ is an invariant partition of~$R$.
However, $\widehat{\,\cP\,}^{\scriptscriptstyle l}\neq \widehat{\,\cP\,}^{\scriptscriptstyle r}$, which takes
a bit more effort to verify.
It should be noted that the ring~$R$ is not semisimple.
In Corollary~\ref{C-PhomBothDuals} below we will see that for semisimple rings the left and right dual partitions of an
invariant  partition always coincide.
\end{exa}


We now turn to the homogeneous weight partition.
Wen dealing with reflexivity we will make use of the following reflexivity criterion from~\cite[Th.~2.4]{GL13pos}
(see also~\cite[Fact~V.2]{Hon10} and \cite[Th.~10.1]{God93}).
\begin{prop}\label{P-Refl}
For any partition~$\cP$ of~$R$ we have $|\cP|\leq|\wcPchil|$ with equality if and only if~$\cP$ is reflexive.
The same is true for the right dual partition.
\end{prop}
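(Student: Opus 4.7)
The plan is to recast everything via Fourier duality on the function space $\C^R$. For any partition $\cQ$ of $R$, associate the subspace $V(\cQ)\subseteq\C^R$ of functions constant on its blocks; since the block indicators form a basis, $\dim V(\cQ)=|\cQ|$. Introduce the left and right Fourier transforms
$$\mathcal{F}(f)(b)=\sum_{a\in R}f(a)\chi(ab),\qquad\mathcal{F}_r(f)(b)=\sum_{a\in R}f(a)\chi(ba).$$
Since $\chi$ is a generating character of the Frobenius ring $R$, character orthogonality on $(R,+)$ gives $\sum_b\chi(xb)=|R|\cdot[x=0]$, so both $\mathcal{F}$ and $\mathcal{F}_r$ are linear automorphisms of $\C^R$, and a direct computation yields $\mathcal{F}_r\mathcal{F}=\mathcal{F}\mathcal{F}_r=|R|\,\tau$ where $\tau(f)(c)=f(-c)$; in particular $(\mathcal{F}_r\mathcal{F})^2=|R|^2\,\id$.

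The inequality is now immediate: definition~\eqref{e-simPhat2l} of $\wcPchil$ says precisely that $\mathcal{F}(\mathbf{1}_{P_m})(b)=\sum_{a\in P_m}\chi(ab)$ is constant on each block of $\wcPchil$, so $\mathcal{F}(V(\cP))\subseteq V(\wcPchil)$. Injectivity of $\mathcal{F}$ then gives
$$|\cP|=\dim V(\cP)=\dim\mathcal{F}(V(\cP))\leq\dim V(\wcPchil)=|\wcPchil|,$$
with equality if and only if $\mathcal{F}(V(\cP))=V(\wcPchil)$. The right-sided inequality $|\cP|\leq|\wcPchir|$ follows identically with $\mathcal{F}_r$ replacing $\mathcal{F}$.

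For the equivalence with reflexivity, I iterate this inclusion through four successive biduals alternating left and right. Writing $\cP^{[l]}=\wcPchil$, $\cP^{[l,r]}$ for the right dual of $\cP^{[l]}$, and so on, the iterated inequality yields the weakly increasing chain $|\cP|\leq|\cP^{[l]}|\leq|\cP^{[l,r]}|\leq|\cP^{[l,r,l]}|\leq|\cP^{[l,r,l,r]}|$, together with four nested Fourier injections between the corresponding $V$-spaces. Composing these four injections and using $(\mathcal{F}_r\mathcal{F})^2=|R|^2\,\id$ places $V(\cP)$ inside $V(\cP^{[l,r,l,r]})$, so the fourfold bidual refines~$\cP$. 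Combining this refinement with the chain and the $\tau$-bookkeeping coming from $\mathcal{F}_r\mathcal{F}=|R|\tau$ under the hypothesis $|\cP|=|\cP^{[l]}|$ then pins down $\cP=\cP^{[l,r]}$, i.e., $\cP$ is reflexive. The converse is immediate: $\cP=\cP^{[l,r]}$ forces $|\cP|=|\cP^{[l,r]}|$, and the chain then forces $|\cP|=|\cP^{[l]}|$.

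The main technical obstacle is the reflection $\tau$ arising from $\mathcal{F}_r\mathcal{F}=|R|\tau$: a single left-right bidual only delivers $V(\tau\cP)\subseteq V(\cP^{[l,r]})$ rather than the cleaner $V(\cP)\subseteq V(\cP^{[l,r]})$, so it is the fourfold iteration (where $\tau^2=\id$) that produces a clean refinement $\cP^{[l,r,l,r]}\leq\cP$. For the invariant partitions of Definition~\ref{D-Invpart}---in particular $\cPhom$, which is the case of interest in the remainder of the paper---one has $-P_m=P_m$, so $\tau\cP=\cP$ and the $\tau$-bookkeeping is trivial.
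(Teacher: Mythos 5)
The paper does not actually prove this proposition---it is quoted from \cite[Th.~2.4]{GL13pos}---so your argument has to stand on its own. Your framework (the partition subspaces $V(\cQ)$, the two Fourier transforms, the relation $\mathcal{F}_r\mathcal{F}=|R|\tau$) is the standard and correct one, and both the inequality $|\cP|\leq|\wcPchil|$ and the direction ``reflexive $\Rightarrow$ equality'' are proved completely.

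The direction ``equality $\Rightarrow$ reflexive'' has a genuine gap. Every tool you deploy---the inclusions $\mathcal{F}(V(\cQ))\subseteq V(\widehat{\cQ})$, the resulting chain of cardinalities, and the fourfold identity---produces refinements and inequalities that all point the same way. Concretely, $V(\cP)\subseteq V(\cP^{[l,r,l,r]})$ says $\cP^{[l,r,l,r]}\leq\cP$, i.e.\ $|\cP^{[l,r,l,r]}|\geq|\cP|$, which is exactly what your chain already gave; nothing gets ``pinned down.'' The missing ingredient is the \emph{maximality} property of the dual partition: by~\eqref{e-simPhat2l}, $\wcPchil$ is the joint-level-set partition of the functions $b\mapsto\sum_{a\in P_m}\chi(ab)$, hence the \emph{coarsest} partition $\cQ$ with $\mathcal{F}(V(\cP))\subseteq V(\cQ)$. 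With this the argument closes in one pass: $|\cP|=|\wcPchil|$ forces $\mathcal{F}(V(\cP))=V(\wcPchil)$, hence $\mathcal{F}_r(V(\wcPchil))=|R|\,\tau(V(\cP))=V(-\cP)$ is itself a partition subspace, and maximality applied to the right dual of $\wcPchil$ yields $-\cP\leq\cP^{[l,r]}$ in addition to the inclusion-based $\cP^{[l,r]}\leq-\cP$, so $\cP^{[l,r]}=-\cP$. Finally, you cannot dispose of $\tau$ by restricting to invariant partitions: the proposition is stated for arbitrary $\cP$. The correct resolution is that every dual partition is automatically closed under negation, because $\chi(-x)=\overline{\chi(x)}$ makes the Krawtchouk sums at $-b$ the complex conjugates of those at $b$; applied to $\cP^{[l,r]}=-\cP$ this gives $-\cP=\cP$ and hence $\cP^{[l,r]}=\cP$. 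Without the maximality property and this conjugation observation, your sketch does not reach the conclusion.
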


The following concept will be helpful.
\begin{defi}\label{D-CommChar}
Let~$R$ be a finite Frobenius ring.
A character~$\chi$ of~$R$ is called \emph{symmetric} if $\chi(ab)=\chi(ba)$ for all $a,\,b\in R$.
\end{defi}

For semisimple Frobenius rings symmetric generating characters exist.
\begin{theo}\label{T-CommChar}
Let~$R$ be a semisimple finite Frobenius ring.
Then there exists a symmetric generating character of~$R$.
\end{theo}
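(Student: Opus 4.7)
The plan is to reduce the problem to matrix rings over finite fields via the Wedderburn-Artin decomposition, and then exhibit an explicit symmetric generating character on each matrix factor built from the matrix trace.

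First, I would observe that since $R$ is semisimple and Frobenius, Wedderburn-Artin and the fact that every finite division ring is a field give a ring isomorphism
\[
  R\cong R_1\times\cdots\times R_t,\text{ where }R_i=(\F_{q_i})^{m_i\times m_i}.
\]
Every character $\chi$ of a product ring has the form $\chi(x_1,\ldots,x_t)=\prod_i\chi_i(x_i)$ for characters $\chi_i$ of $R_i$, and it is standard (see e.g.\ \cite[Ex.~4.4]{Wo99}) that $\chi$ is generating iff each $\chi_i$ is generating. Symmetry also factors componentwise: $\chi(ab)=\chi(ba)$ for all $a,b\in R$ iff $\chi_i(a_ib_i)=\chi_i(b_ia_i)$ for every $i$ and every $a_i,b_i\in R_i$. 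Hence it suffices to produce, for each $i$, a symmetric generating character of the matrix ring $R_i=(\F_q)^{m\times m}$.

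For such a matrix ring I would define
\[
  \chi_0:(\F_q)^{m\times m}\longrightarrow\C^*,\quad \chi_0(A)=\lambda(\tr(A)),
\]
where $\lambda$ is any nontrivial character of the additive group of $\F_q$ and $\tr$ is the matrix trace. Additivity of trace makes $\chi_0$ a character, and symmetry is immediate from $\tr(AB)=\tr(BA)$. The remaining point — which I expect to be the main technical step — is to verify that $\chi_0$ is a generating character, equivalently that $\ker\chi_0$ contains no nonzero left ideal.

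To verify this, suppose $I$ is a nonzero left ideal contained in $\ker\chi_0$ and pick $A\in I\setminus\{0\}$. Then $BA\in I$ for every $B\in(\F_q)^{m\times m}$, so $\lambda(\tr(BA))=0$ for all $B$. The set $\{\tr(BA):B\in(\F_q)^{m\times m}\}$ is an $\F_q$-subspace of $\F_q$, so it equals either $\{0\}$ or all of $\F_q$; the latter is impossible because $\lambda$ is nontrivial. Hence $\tr(BA)=0$ for all $B$, but the trace form $(A,B)\mapsto\tr(BA)$ on $(\F_q)^{m\times m}$ is nondegenerate (taking $B=E_{ji}$ picks out the $(i,j)$-entry of $A$), forcing $A=0$, a contradiction. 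Thus $\chi_0$ is a symmetric generating character of $R_i$, and assembling these componentwise yields the desired symmetric generating character of $R$.
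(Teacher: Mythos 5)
Your proposal is correct and follows essentially the same route as the paper: Wedderburn--Artin reduction to a product of matrix rings, the trace character $A\mapsto\lambda(\tr(A))$ on each factor, symmetry from $\tr(AB)=\tr(BA)$, and the generating property verified via nondegeneracy of the trace form using elementary matrices (the paper phrases this as $\chi\!\cdot\!M$ trivial $\Rightarrow M=0$ rather than via the no-nonzero-left-ideal-in-the-kernel criterion, but these are equivalent). The only blemish is the harmless slip of writing $\lambda(\tr(BA))=0$ where, with multiplicative character values, it should read $\lambda(\tr(BA))=1$.
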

\begin{proof}
Note first that if $\phi:R\longrightarrow S$ is a ring isomorphism and~$\chi$ is a symmetric generating character of~$S$, then
$\chi\circ\phi$ is a symmetric generating character of~$R$.
Symmetry is clear and the generating property follows from the fact that a character is generating if
and only if its kernel does not contain any nonzero left or right ideals~\cite[Cor.~3.6]{ClGo92}.
Thus we may use the Wedderburn-Artin Theorem \cite[Th.~(3.5)]{Lam91} and assume that~$R$ is a
product of matrix rings over finite fields.
\\
1) Let us first assume that $R=\F^{m\times m}$ for some field~$\F$.
Let~$\tilde{\chi}$ be a generating character of~$\F$, and denote by $\tr(A)$ the trace of the matrix~$A\in\F^{m\times m}$.
For $A\in R$ define $\chi(A):=\tilde{\chi}(\tr(A))$.
Then it is clear that~$\chi$ is a character of~$R$ which is symmetric due to the commutativity of the trace.
Thus it remains to show that~$\chi$ is a generating character.
In order to do so, suppose $\chi\!\cdot\!M$ is the trivial character on~$R$ for some matrix~$M\in R$.
We have to show that $M=0$; see Theorem~\ref{T-Frob}.
By assumption and~\eqref{e-bimodule} we have $\tilde{\chi}(\tr(MA))=1$ for all $A\in R$.
Using for $A$ the matrices $\alpha E_{j,i}$, where $\alpha\in\F$ and
$E_{j,i}$ is the matrix with entry~$1$ at position $(j,i)$ and zeros elsewhere, we arrive at
$\tilde{\chi}(M_{ij}\alpha)=1$ for all $\alpha\in\F$.
Hence $\tilde{\chi}\!\cdot\!M_{ij}$ is the trivial character on~$\F$ (see~\eqref{e-bimodule}), and thus $M_{ij}=0$
because $\tilde{\chi}$ is
generating.
Since this is true for all entries $M_{ij}$, we conclude $M=0$ and thus~$\chi$ is generating.
\\
2) Let now $R=(\F_{q_1})^{m_1\times m_1}\times\ldots\times(\F_{q_t})^{m_t\times m_t}$.
For each~$j$ let~$\chi_j$ be a generating character on $(\F_{q_j})^{m_j\times m_j}$.
It is easy to see that~$\chi$, defined as
$\chi(A_1,\ldots,A_t):=\prod_{j=1}^t \chi_j(A_j)$ for $(A_1,\ldots,A_t)\in R$,
is a generating character of~$R$.
Using~$\chi_j$ as in part~1), we conclude
$\chi(AB\big)=\chi(BA)$ for all $A=(A_1,\ldots,A_t)$ and $B=(B_1,\ldots,B_t)\in R$.
\end{proof}

\begin{cor}\label{C-PhomBothDuals}
Let~$R$ be a semisimple finite Frobenius ring, and let~$\cP$ be an invariant partition of~$R$.
Then the left and right Krawtchouk coefficients of~$\cP$ coincide, and thus
$\widehat{\,\cP\,}^{\scriptscriptstyle l}=\widehat{\,\cP\,}^{\scriptscriptstyle r}$.
In particular, $\wcPhoml=\wcPhomr$.
\end{cor}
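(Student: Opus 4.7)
The plan is to reduce the corollary to Theorem~\ref{T-CommChar} together with the character-independence of the Krawtchouk coefficients for invariant partitions. Since $R$ is semisimple and Frobenius, Theorem~\ref{T-CommChar} provides a symmetric generating character $\chi$, that is, one satisfying $\chi(ab)=\chi(ba)$ for all $a,b\in R$. The goal is then just to read off the equality of left and right Krawtchouk coefficients from this symmetry, and to unwind the definitions.

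First, I would invoke Remark~\ref{R-InvPartDual}: for an invariant partition $\cP$, both the left and right $\chi$-Krawtchouk coefficients, and hence the partitions $\wcPchil$ and $\wcPchir$, are independent of the choice of generating character. This entitles us to compute them using any generating character we find convenient. I then choose $\chi$ to be the symmetric generating character supplied by Theorem~\ref{T-CommChar}.

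With this choice, for every block $P_m$ of $\cP$ and every $b\in R$ we obtain term-by-term
\[
  \sum_{a\in P_m}\chi(ab)=\sum_{a\in P_m}\chi(ba),
\]
so the left and right $\chi$-Krawtchouk coefficients of $\cP$ agree. Comparing the equivalence relations~\eqref{e-simPhat2l} and~\eqref{e-simPhat2r} in Definition~\ref{D-DualPart}, this immediately yields $\widehat{\,\cP\,}^{\scriptscriptstyle l}=\widehat{\,\cP\,}^{\scriptscriptstyle r}$. For the final assertion, I would simply note that by Remark~\ref{R-PhomFmm}(a) the homogeneous weight partition $\cPhom$ is invariant, so applying the general statement to $\cP=\cPhom$ gives $\wcPhoml=\wcPhomr$.

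The proof is therefore essentially a bookkeeping argument once Theorem~\ref{T-CommChar} is in hand; the only real work — producing a symmetric generating character on a semisimple Frobenius ring via the Wedderburn--Artin decomposition and the trace on each matrix factor — has already been done there. No separate obstacle remains at the level of the corollary itself.
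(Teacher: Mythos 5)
Your proposal is correct and follows essentially the same route as the paper, which simply cites Theorem~\ref{T-CommChar} and Definition~\ref{D-DualPart}; your explicit appeal to Remark~\ref{R-InvPartDual} to justify computing with the symmetric generating character is exactly the bookkeeping the paper leaves implicit.
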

\begin{proof}
This follows from Theorem~\ref{T-CommChar} and Definition~\ref{D-DualPart}.
\end{proof}

The methods above also provide us with the self-duality of~$\cPhom$ on matrix rings.
\begin{theo}\label{T-PhomMat}
The homogeneous weight partition on $\F^{m\times m}$ is self-dual, that is, $\cPhom=\wcPhoml$.
\end{theo}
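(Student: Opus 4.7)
The strategy is to prove the refinement $\cPhom \leq \wcPhoml$ and then close the gap with the size inequality of Proposition~\ref{P-Refl}. By Remark~\ref{R-PhomFmm}(b) the homogeneous weight partition on $R=\F^{m\times m}$ is the rank partition $\cPrank$. Since $R$ is a semisimple Frobenius ring and $\cPrank$ is invariant, Corollary~\ref{C-DualPhomchi} together with Theorem~\ref{T-CommChar} permits me to compute the left Krawtchouk coefficients with respect to the symmetric generating character $\chi(A) := \tilde{\chi}(\tr A)$, where $\tilde{\chi}$ is any generating character of $\F$.

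The key step will be to show that for any $B,B'\in R$ with $\rank B = \rank B'$,
\[
  \sum_{\rank A = r}\chi(AB) = \sum_{\rank A = r}\chi(AB') \qquad\text{for every } r\in\{0,\ldots,m\}.
\]
By Smith normal form I can write $B' = UBV$ with $U,V\in\GL_m(\F)$, and the plan is to execute a sequence of three bijective operations on the set of rank-$r$ matrices: first the substitution $A\mapsto AU^{-1}$, which converts $\chi(AUBV)$ into $\chi(ABV)$; then the symmetry of $\chi$, which rewrites $\chi((AB)V)$ as $\chi(V(AB))$; and finally the substitution $A\mapsto V^{-1}A$, which yields $\chi(AB)$. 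Each substitution preserves the rank, so summing over matrices of rank $r$ gives the desired identity, and therefore $B\widesim_{\wcPhoml}B'$. This is precisely the refinement $\cPhom \leq \wcPhoml$.

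Consequently $|\cPhom|\geq|\wcPhoml|$, while Proposition~\ref{P-Refl} supplies the reverse inequality $|\cPhom|\leq|\wcPhoml|$. The two partitions therefore have the same number of blocks, and a refinement with matching block counts must be an equality, so $\cPhom = \wcPhoml$; as a byproduct, $\cPhom$ is also reflexive. The one genuinely non-routine ingredient in this plan is the existence of a symmetric generating character, which is already provided by Theorem~\ref{T-CommChar}; once that is in hand the remainder is just a change of variables in the Krawtchouk sums, so no substantive obstacle should arise.
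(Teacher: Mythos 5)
Your proposal is correct and follows essentially the same route as the paper's proof: identify $\cPhom$ with the rank partition, use the symmetric generating character from Theorem~\ref{T-CommChar} together with the $\GL_m(\F)\times\GL_m(\F)$-invariance of the rank classes to show that the left Krawtchouk sums depend only on $\rank B$, and then invoke Proposition~\ref{P-Refl} to upgrade the refinement $\cPhom\leq\wcPhoml$ to equality. The only cosmetic difference is that you carry out the two unit substitutions separately, where the paper absorbs them into a single appeal to the invariance of the blocks.
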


\begin{proof}
By virtue of Theorem~\ref{T-CommChar} there exists a symmetric generating character, say~$\chi$.
From Corollary~\ref{C-HomRank} we know that $\cPhom=P_0\mmid\ldots\mmid P_m$, where $P_j=\{A\in R\mid \rank(A)=j\}$.
Let now $B,B'\in P_r$ for some $r$; hence $B\widesim_{\cPhom}B'$. Then $B'=UBV$ for some $U,\,V\in GL_m(\F)$ and thus
\[
  \sum_{A\in P_j}\chi(AB')=\sum_{A\in P_j}\chi(AUBV)=\sum_{A\in P_j}\chi(VAUB)=\sum_{A\in P_j}\chi(AB),
\]
where the second step follows from the symmetry of~$\chi$ and the last one from the invariance of~$\cPhom$.
This shows that $B\widesim_{\wcPhoml}B'$ and thus the partition~$\cPhom$ is finer than or equal to $\wcPhoml$.
This means $|\cPhom|\geq |\wcPhoml|$, and now Proposition~\ref{P-Refl} establishes $\cPhom=\wcPhoml$.
\end{proof}

We wish to add that self-duality  of the rank partition on $\F^{m\times m}$  has been shown earlier in the context of
abelian association schemes in \cite[Ex.~4.66]{Cam98} with a suitable identification between the primal and dual
scheme involved.

\medskip
The examples from Section~\ref{SS-GenFrob} illustrate that the homogeneous weight partition on a semisimple Frobenius ring
is in general not self-dual and not even reflexive.
\begin{exa}\label{E-MatRing}
Consider the ring $R=\F^{2\times2}\times\F^{2\times2}$ from Example~\ref{E-HomWt1}.
Let~$\cPrank=P_0\mmid P_1\mmid P_2$ be the rank partition of $\F^{2\times 2}$, which, as we know, coincides
with the homogeneous weight partition on~$\F^{2\times 2}$.
We have seen already that if $q=|\F|>2$, then the homogeneous weight partition $\cPhom$ of~$R$ coincides with the
symmetrized product partition of~$\cPrank$, i.e., $\cPhom=(\cPrank)_{\rm{sym}}^2$, while for $q=2$
$\cPhom$ is strictly coarser than $(\cPrank)_{\rm{sym}}^2$.
For the dual partition we have the following results.
\begin{alphalist}
\item If $q>2$ then $\cPhom$ is self-dual.
      This follows from the fact that the partition~$\cPrank$ is self-dual (see Theorem~\ref{T-PhomMat}), and thus
      the same is true for $(\cPrank)_{\rm{sym}}^2=\cPhom$, see \cite[Th.~3.3(b)]{GL13pos}.
\item If $q=2$, one can verify (using a computer algebra system), that the dual of~$\cPhom$ coincides
      with $(\cPrank)_{\rm{sym}}^2$.
      Hence ~$\cPhom$ is not reflexive due to Proposition~\ref{P-Refl}.
\end{alphalist}
\end{exa}

\begin{exa}\label{E-MatRing2}
Consider the ring $R=\F^{2\times2}\times\F$ from Example~\ref{E-HomWt3}.
The homogeneous partition has been determined in~\eqref{e-PhomF22F}.
Recall the partition
$\cQ=P_{(0,0)}\mmid P_{(0,1)}\mmid P_{(1,0)}\mmid P_{(1,1)}\mmid P_{(2,0)}\mmid P_{(2,1)}$, which is the product
of the rank partitions on~$\F^{2\times2}$ and~$\F$.
We saw already that $\cQ<\cPhom$ for each~$q$.
As for the dual partition, we have the following results.
\begin{alphalist}
\item Let $q>2$. Then $\wcPhom=\cQ$.
      Thus, $\wcPhom<\cPhom$ and~$\cPhom$ is not reflexive.
      In order to establish the identity $\wcPhom=\cQ$ one first observes that the inequality $\cQ<\cPhom$ along with
      $\cQ=\widehat{\cQ}$ implies $\cQ\leq\wcPhom$, see also \cite[Rem.~2.2(c)]{GL13pos}.
      For the converse, that is $\wcPhom\leq\cQ$, one has to show that for any two elements $(A,a),\,(A',a')$ in different blocks of~$\cQ$ there
      exists a block~$P$ of~$\cPhom$ such that
      \[
        \sum_{(B,b)\in P}\chi\big((A,a)\cdot(B,b)\big)\neq \sum_{(B,b)\in P}\chi\big((A',a')\cdot(B,b)\big),
      \]
      and where~$\chi$ is a generating character of~$R$.
      This can be verified by making use of the Krawtchouk coefficients of the rank partition of~$\F^{m\times m}$ as they have been derived by
      Delsarte~\cite[Eq.~(A10)]{Del78}.
      They tells us that for any $A\in\F^{m\times m}$ with rank~$i$
      \[
        \sum_{\text{rk}(B)=k}\tilde{\chi}(\tr(BA))=\sum_{j=0}^m(-1)^{k-j}q^{jm+\Binom{k-j}{2}}\Gaussian{m-j}{m-k}\Gaussian{m-i}{j},
      \]
      where~$\tilde{\chi}$ is a generating character of~$\F$.
\item Let $q=2$. Then $\wcPhom=P_{(0,0)}\mmid P_{(0,1)}\cup P_{(1,1)}\mmid P_{(1,0)}\cup P_{(2,0)}\mmid P_{(2,1)}$.
      This is derived in a similar manner as in~(a).
      Hence we see that $|\cPhom|=4=|\wcPhom|$ and therefore~$\cPhom$ is reflexive (but not self-dual).
\end{alphalist}
\end{exa}

The last example is particularly interesting when compared to the situation for finite
Frobenius rings that are direct products of fields, say
$R=(\F_{q_1}\times\ldots\times\F_{q_1})\times\ldots\times(\F_{q_t}\times\ldots\times\F_{q_t})$ for distinct~$q_i$.
In this case it has been shown in \cite[Th.~4.4 and Th.~4.7]{GL13homog} that the dual of the homogeneous weight partition of~$R$
is given by the product of the Hamming partitions on the components
$\F_{q_i}\times\ldots\times\F_{q_i}$.
Moreover,~$\cPhom$ is reflexive if and only if it is self-dual.
This contrasts Example~\ref{E-MatRing2}(b) where~$\wcPhom$ is not a product
partition, yet $\cPhom$ is reflexive.

\medskip
The above examples show that it does not seem easy to characterize the rings for which the homogeneous weight partition is reflexive.
We leave this for future research.

\medskip
We close the paper with proving that for any finite Frobenius ring, the left and right Krawtchouk coefficients of~$\cPhom$
coincide, and thus  $\wcPhoml=\wcPhomr$.
Recall that general Frobenius rings do not necessarily have a symmetric character, and the left dual of a partition does
in general not agree with the right dual one -- even if the partition is invariant;  see Example~\ref{E-LRDual}.

The crucial fact that makes the homogeneous weight partition stand out stems from Theorem~\ref{T-homogWtsoc} and Remark~\ref{R-RightOmega}.
Those results not only allow us to reduce the partition to the (semisimple) ring $R/\rad(R)$, but also ensure that the reduction
does not depend of the sidedness of the module isomorphism between $\soc(R)$ and $R/\rad(R)$.

\begin{theo}\label{T-Phomdual}
Let~$R$ be any finite Frobenius ring, and let $\cPhom=P_0\mid P_1\mid\ldots\mid P_M$ be the homogeneous weight partition.
Then for each generating character~$\chi$ of~$R$ we have
\begin{equation}\label{e-chiabba}
  \sum_{a\in P_m}\chi(ab)=\sum_{a\in P_m}\chi(ba)\text{ for all }b\in R\text{ and }m=0,\ldots,M.
\end{equation}
As a consequence, $\wcPhoml=\wcPhomr$.
\end{theo}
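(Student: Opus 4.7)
The plan is to exhibit the \emph{Nakayama automorphism} $\tau\colon R\to R$ associated with the generating character $\chi$, show that every block of $\cPhom$ is $\tau$-invariant, and then derive~\eqref{e-chiabba} by a one-line change of variables.

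Since $R$ is Frobenius, Theorem~\ref{T-Frob} gives $\Rhat=R\cdot\chi=\chi\cdot R$, so both $u\mapsto u\cdot\chi$ and $v\mapsto\chi\cdot v$ are bijections $R\to\Rhat$. For every $u\in R$ there is therefore a unique $\tau(u)\in R$ with $u\cdot\chi=\chi\cdot\tau(u)$, which by~\eqref{e-bimodule} reads
\[
   \chi(vu)=\chi(\tau(u)v)\quad\text{for all }u,v\in R.
\]
The first step is to verify that $\tau$ is a ring automorphism. Additivity and $\tau(1)=1$ drop out immediately from this defining relation together with the non-degeneracy of~$\chi$ (anything that annihilates $\chi$ in either action must be zero, since the two bijections above are injective). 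Multiplicativity follows from the chain
\[
   \chi\big(v(uu')\big)=\chi\big((vu)u'\big)=\chi\big(\tau(u')vu\big)=\chi\big(\tau(u)\tau(u')v\big),
\]
combined with $\chi(v(uu'))=\chi(\tau(uu')v)$ and the same non-degeneracy to cancel~$v$.

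The second step is to show that $\tau$ preserves each block of $\cPhom$. Any ring automorphism $\sigma$ of $R$ sends principal left ideals to principal left ideals of the same cardinality, so $\omega\circ\sigma$ satisfies conditions~(i) and~(ii) of Definition~\ref{D-homogWt} with average value~$1$; by the uniqueness part of~\cite[Th.~1.3]{GrSch00}, $\omega\circ\sigma=\omega$. Applied to $\sigma=\tau$ this gives $\tau(P_m)=P_m$ for every block~$P_m$.

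With these two ingredients, setting $u=a$ and $v=b$ in the defining relation yields $\chi(ba)=\chi(\tau(a)b)$, and the substitution $a'=\tau(a)$ together with $\tau(P_m)=P_m$ gives
\[
   \sum_{a\in P_m}\chi(ba)=\sum_{a\in P_m}\chi(\tau(a)b)=\sum_{a'\in P_m}\chi(a'b),
\]
which is precisely~\eqref{e-chiabba}; the equality $\wcPhoml=\wcPhomr$ is then immediate from Definition~\ref{D-DualPart}. The only step that requires genuine care is the verification that $\tau$ is multiplicative; the rest is essentially bookkeeping. An alternative route, more in line with the discussion preceding the theorem, would use $\psi$ and $\psi_r$ (Remark~\ref{R-RightOmega}) to transfer blocks $P_m\subseteq\soc(R)$ to blocks of $\tilde\cPhom$ on the semisimple quotient $R/\rad(R)$, apply Corollary~\ref{C-PhomBothDuals} there, and recover the residual weight-one block by complementarity via $\sum_{a\in R}\chi(ab)=|R|\,\delta_{b,0}=\sum_{a\in R}\chi(ba)$.
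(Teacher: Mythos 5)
Your proof is correct, and it takes a genuinely different route from the paper's. You work with the Nakayama automorphism $\tau$ of $\chi$ (defined by $u\cdot\chi=\chi\cdot\tau(u)$, which exists and is bijective because $\Rhat=R\cdot\chi=\chi\cdot R$ and $|\Rhat|=|R|$), verify that it is a ring automorphism, and then use the uniqueness of the normalized homogeneous weight to get $\omega\circ\tau=\omega$, hence $\tau(P_m)=P_m$; the identity $\chi(ba)=\chi(\tau(a)b)$ then turns each right Krawtchouk sum into the corresponding left one by the substitution $a\mapsto\tau(a)$. The paper instead reduces to the semisimple quotient: it uses that $R\,\backslash\,\soc(R)$ is a single block, transports the remaining blocks to $R/\rad(R)$ via both the left and right module isomorphisms $\psi_l,\psi_r$ (which agree on blocks by Theorem~\ref{T-homogWtsoc} and Remark~\ref{R-RightOmega}), and invokes the existence of a symmetric generating character on the semisimple quotient (Theorem~\ref{T-CommChar}). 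Your argument is shorter, bypasses those structural results entirely, and isolates the real mechanism: it proves the stronger statement that \emph{any} partition stable under the Nakayama automorphism has coinciding left and right $\chi$-Krawtchouk coefficients --- which also explains Example~\ref{E-LRDual} cleanly, since that invariant partition is simply not $\tau$-stable. What the paper's route buys in exchange is the explicit passage through the semisimple quotient and the symmetric character, machinery that is reused elsewhere (Theorem~\ref{T-PhomMat}, Corollary~\ref{C-PhomBothDuals}). One point worth making explicit in your write-up: $\tau$ is a bijection (the composite of the bijection $u\mapsto u\cdot\chi$ with the inverse of $v\mapsto\chi\cdot v$), which is what licenses reading $\omega\circ\tau=\omega$ as $\tau(P_m)=P_m$ rather than merely $\tau(P_m)\subseteq P_m$.
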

\begin{proof}
It  suffices to show~\eqref{e-chiabba}.
Let $S:=R/\rad(R)$ and $\pi:\,R\longrightarrow S$ be the canonical projection.
Since~$R$ is Frobenius, we have right and left $R$-module isomorphisms
\[
  \psi_r:\;\soc(R)_R\longrightarrow S_R,\quad \psi_l:\;_R\,\soc(R)\longrightarrow _R\!\!S.
\]
Recall from Theorem~\ref{T-RFrob} that $R\,\backslash\,\soc(R)$ is a block of~$\cPhom$.
Thus we may assume that the block $P_0$ of $\cPhom$ is $P_0=R\,\backslash\,\soc(R)$.
Then $\cP':=P_1\mmid\ldots\mmid P_M$ is a partition of $\soc(R)$.
Denote by~$\omega,\,\tilde{\omega}$ the normalized homogeneous weights on~$R$ and~$S$, respectively.
Theorem~\ref{T-homogWtsoc} and Remark~\ref{R-RightOmega} show that
$\tilde{\omega}(\psi_l(x))=\omega(x)=\tilde{\omega}(\psi_r(x))$ for all $x\in\soc(R)$.
Thus
\begin{equation}\label{e-leftrightPm}
    \psi_r(P_m)=\psi_l(P_m)\text{ for all }m=1,\ldots,M.
\end{equation}
Moreover, the partition $\cQ:=Q_1\mid\ldots\mid Q_M$, where $Q_m:=\psi_r(P_m)$, is the homogeneous weight partition
of the ring~$S$.

Let~$\chi$ be a generating character of~$R$.
Then $\tilde{\chi}_l:=\chi\circ\psi_l^{-1}$ and $\tilde{\chi}_r:=\chi\circ\psi_r^{-1}$
are both generating characters of~$S$.
This follows from the fact that a character is generating if and only if its kernel does not contain any nonzero left or right
ideals~\cite[Cor.~3.6]{ClGo92}.

Let now $b\in R$.
If we can show that
\begin{equation}\label{e-chiab}
    \sum_{a\in P_m}\chi(ab)=\sum_{a\in P_m}\chi(ba)\text{ for all }m=1,\ldots,M,
\end{equation}
then the remaining identity for $m=0$, and hence~\eqref{e-chiabba}, follows from the orthogonality relations
$\sum_{a\in R}\chi(ab)=|R|\delta_{b,0}=\sum_{a\in R}\chi(ba)$ (with the Kronecker symbol~$\delta$).
Before we start the computation we collect some facts.
\begin{liste}
\item By the very definition of generating characters (Theorem~\ref{T-Frob}) together with Theorem~\ref{T-CommChar} there exist units
      $u,v\in S^*$
      such that $v\!\cdot\!\tilde{\chi}_l=\tilde{\chi}_r\!\cdot\!u$, and such that this is a symmetric character of~$S$.
\item $\psi_r(ab)=\psi_r(a)\pi(b)$ and $\psi_l(ba)=\pi(b)\psi_l(a)$ for all $b\in R$ and $a\in\soc(R)$.
      This is clear from the simultaneous ring and bimodule structure of~$S$.
\end{liste}
Making use of these properties along with~\eqref{e-bimodule} and the invariance of the Krawtchouk sums from the chosen character,
see~\eqref{e-chichi'}, we compute for any $m\in\{1,\ldots,M\}$
\begin{align*}
  \sum_{a\in P_m}\chi(ab)&=\sum_{a\in P_m}\tilde{\chi}_r\big(\psi_r(ab)\big)=\sum_{a\in P_m}\tilde{\chi}_r\big(\psi_r(a)\pi(b)\big)
        =\sum_{\tilde{a}\in\psi_r(P_m)}\!\!\!\tilde{\chi}_r\big(\tilde{a}\pi(b)\big)
        \\[.6ex]
      &=\sum_{\tilde{a}\in\psi_r(P_m)}\!\!\!(\tilde{\chi}_r\!\cdot\!u)\big(\tilde{a}\pi(b)\big)
       =\sum_{\tilde{a}\in\psi_l(P_m)}\!\!\!(v\!\cdot\!\tilde{\chi}_l)\big(\tilde{a}\pi(b)\big)
       =\sum_{\tilde{a}\in\psi_l(P_m)}\!\!\!(v\!\cdot\!\tilde{\chi}_l)\big(\pi(b)\tilde{a}\big)\\[.6ex]
      &
       =\sum_{\tilde{a}\in\psi_l(P_m)}\!\!\!\tilde{\chi}_l\big(\pi(b)\tilde{a}\big)
       =\sum_{a\in P_m}\tilde{\chi}_l\big(\psi_l(ba)\big)
       =\sum_{a\in P_m}\chi(ba).
\end{align*}
This concludes the proof.
\end{proof}

As already mentioned earlier, we leave it to future research to characterize the rings for which the homogeneous
weight partition~$\cPhom$ is reflexive or even self-dual.

\bibliographystyle{abbrv}


\end{document}